\DeclareMathAlphabet{\mathpzc}{OT1}{pzc}{m}{it}
\begin{document}
%
\title{Safety Verification of Phaser Programs}

\author{Zeinab Ganjei,
Ahmed Rezine,
Petru Eles and 
Zebo Peng\\
Link\"{o}ping University, Sweden
}

\date{}

\maketitle

\newtheorem{definition}{Definition}
\newtheorem{lemma}{Lemma}
\newtheorem{theorem}{Theorem}


\newcommand{\nats}{\ensuremath{\mathbb{N}}}
\newcommand{\ints}{\ensuremath{\mathbb{Z}}}

\newcommand{\infNats}{\ensuremath{\mathbb{N}^\infty}}

\newcommand{\set}[1]{\left\{{#1}\right\}}
\newcommand{\setcomp}[2]{\set{#1 ~|~#2 }}
\newcommand{\tuple}[1]{\left({#1}\right)}

\newcommand{\smset}[1]{\{{#1}\}}
\newcommand{\smsetcomp}[2]{\smset{#1 ~|~#2 }}
\newcommand{\smtuple}[1]{({#1})}

\newcommand{\upTo}[1]{[1,{#1}]}
\newcommand{\intSet}[2]{[{#1},{#2}]}

\newcommand{\sizeOf}[1]{|{#1}|}
\newcommand{\subsetsOf}[1]{2^{#1}}
\newcommand{\pFunctionsOf}[2]{{\mathtt{Pfn}}\tuple{{#1},{#2}}}

\newcommand{\tFunctionsOf}[2]{{\mathtt{Fn}}\tuple{{#1},{#2}}}

\newcommand{\undefined}{\uparrow}
\newcommand{\emptyFunction}[2]{\varnothing_{#1}^{#2}}

\newcommand{\defined}{\downarrow}

\newcommand{\subst}[3]{{#1}{\left[{#2}/{#3}\right]}}
\newcommand{\substSet}[2]{{#1}{\left[{#2}\right]}}
\newcommand{\mapSubst}[3]{{#1}{\left[{#2}\leftarrow\left({#3}\right)\right]}}

\newcommand{\restrict}[2]{{#1}_{|{#2}}}

\newcommand{\xSet}{X}
\newcommand{\ySet}{Y}
\newcommand{\kconsts}{K}
\newcommand{\xvar}{x}
\newcommand{\yvar}{y}
\newcommand{\ael}{a}
\newcommand{\bel}{b}
\newcommand{\zerovar}{zero}
\newcommand{\kconst}{k}
\newcommand{\pkconst}{k'}
\newcommand{\consts}{Const}
\newcommand{\vals}{Val}
\newcommand{\val}{val}
\newcommand{\gclause}{\delta}
\newcommand{\gcstr}{\Delta}
\newcommand{\ggraph}{\wp}
\newcommand{\topOf}[1]{\mathtt{topOf}\left(#1\right)}
\newcommand{\emptyGraph}{\ggraph_\false}
\newcommand{\ggraphs}{\mathcal{G}}
\newcommand{\oggraph}{\ggraph'}
\newcommand{\vectorOf}[1]{\mathtt{vectorOf}\left(#1\right)}
\newcommand{\ggraphOf}[1]{\mathtt{graphOf}\left(#1\right)}
\newcommand{\ggraphTuple}{\tuple{\vertices,\edges}}
\renewcommand{\iff}{\Longleftrightarrow}
\newcommand{\cnstr}{\mathcal{c}}
\newcommand{\vertices}{V}
\newcommand{\verticesOf}[1]{\mathtt{varsOf}(#1)}
\newcommand{\vertex}{v}
\newcommand{\nodeOf}[1]{\vertex_{#1}}
\newcommand{\edges}{E}
\newcommand{\proj}{Proj}
\newcommand{\merge}{\oplus}
\newcommand{\entails}{\sqsubseteq }
\newcommand{\comp}{\circ}
\newcommand{\closureOf}[1]{\mathtt{clo}\tuple{#1}}
\newcommand{\gedge}[3]{{#1}\xrightarrow{#2}{#3}}
\newcommand{\degreeOf}[1]{\mathtt{degreeOf}({#1})}
\newcommand{\addVars}{\oplus}
\newcommand{\removeVars}{\ominus}
\newcommand{\projectAway}[2]{{#1}_{#2\uparrow}}
\newcommand{\entailed}{\sqsubseteq_\ggraphs}
\newcommand{\satOf}[1]{Sat(#1)}
\newcommand{\isSat}[1]{\mathtt{isSat}(#1)}
\newcommand{\isRegistred}[3]{\mathtt{isReg}(#1,#2,#3)}
\newcommand{\registred}[2]{\mathtt{Reg}(#1,#2)}


\newcommand{\thVarSet}{\mathtt{T}}
\newcommand{\bVarSet}{\mathtt{B}}
\newcommand{\phVarSet}{\mathtt{V}}

\newcommand{\phvar}{\mathtt{v}}
\newcommand{\ophvar}{\mathtt{w}}
\newcommand{\thvar}{{\mathtt{thvar}}}
\newcommand{\bvar}{\mathtt{b}}
\newcommand{\var}{\mathtt{var}}

\newcommand{\seqVal}{\mathtt{s}}

\newcommand{\seqSet}{\mathtt{S}}
\newcommand{\dknow}{\mathtt{*}}

\newcommand{\grammarOr}{\;\mbox{\makebox[0mm][l]{\raisebox{-0.6ex}{\textbar}}%
    \raisebox{0.6ex}{\textbar}}\;}
\newcommand{\program}{\mathtt{prg}}
\newcommand{\programTuple}{\tuple{\bVarSet,\phVarSet,\tasks}}
\newcommand{\prgs}{\mathtt{Pr}}
\newcommand{\stmt}{\mathtt{stmt}}
\newcommand{\newt}{\mathtt{new\_thread()}}
\newcommand{\fork}[2]{\mathtt{fork(#1)\set{#2}}}
\newcommand{\newp}{\mathtt{newPhaser()}}
\newcommand{\reg}[2]{#1\mathtt{.reg(#2)}}
\newcommand{\asynch}[2]{\mathtt{asynch(#1,#2)}}
\newcommand{\dereg}[1]{#1\mathtt{.drop()}}
\newcommand{\sig}[1]{#1.\mathtt{signal()}}
\newcommand{\wait}[1]{#1.\mathtt{wait()}}
\newcommand{\while}[2]{\mathtt{while(#1)\set{ #2 }}}
\newcommand{\close}{\mathtt{end}}
\newcommand{\skipp}{\mathtt{skip}}
\newcommand{\cond}{\mathtt{cond}}
\newcommand{\bval}{\mathtt{b}}
\newcommand{\true}{\mathtt{true}}
\newcommand{\false}{\mathtt{false}}
\newcommand{\ndet}{\mathtt{ndet()}}
\newcommand{\ifp}[2]{\mathtt{if(#1) \set{  #2 }}}
\newcommand{\assert}[1]{\mathtt{assert(#1)}}
\newcommand{\exit}{\mathtt{exit}}
\newcommand{\task}{\mathtt{task}}
\newcommand{\tasks}{\mathtt{T}}
\newcommand{\parametersOf}[1]{\mathtt{paramOf}(#1)}


\newcommand{\thNum}{{t^{\bullet}}}
\newcommand{\phNum}{{p^{\bullet}}}
\newcommand{\valMp}{\eta}
\newcommand{\boolMp}{\pmb{bv}}
\newcommand{\pcMp}{\pmb{pc}}
\newcommand{\varMp}{\pmb{pv}}
\newcommand{\regMp}{\mathbf{reg}}
\newcommand{\cPhaseMp}{\pmb{\varphi}}
\newcommand{\sPhaseMp}{\pmb{\gamma}}
\newcommand{\mostGeneralsPhaseMps}[2]{\Gamma(#1,#2)}

\newcommand{\cConf}{c}
\newcommand{\cConfInit}{\cConf_{init}}
\newcommand{\sConf}{\phi}
\newcommand{\osConf}{\psi}
\newcommand{\sConfInit}{\phi_{init}}
\newcommand{\sConfBad}{\phi_{bad}}
\newcommand{\sConfSet}{\Phi}
\newcommand{\osConfSet}{\Psi}
\newcommand{\sConfSetInit}{\Phi_{init}}
\newcommand{\sConfSetBad}{\Phi_{bad}}

\newcommand{\relaxDegree}[2]{\rho_{#2}(#1)}

\newcommand{\errorConf}{\cConf_{\mathtt{err}}}
\newcommand{\errorConfAssert}{\mathtt{assertionfault}}
\newcommand{\cConfSetAssert}[2]{\mathtt{badConfs}_{\mathtt{assert}}^{(#1,#2)}}
\newcommand{\cConfSetRace}[2]{\mathtt{badConfs}_{\mathtt{race}}^{(#1,#2)}}
\newcommand{\cConfSetRuntime}[2]{\mathtt{badConfs}_{\mathtt{runtime}}^{(#1,#2)}}
\newcommand{\cConfSetDeadlock}[2]{\mathtt{badConfs}_{\mathtt{deadlock}}^{(#1,#2)}}
\newcommand{\controlRestrictAssert}[2]{\mathcal{C}_{\mathtt{assert}}^{(#1,#2)}}
\newcommand{\controlRestrictRace}[2]{\mathcal{C}_{\mathtt{race}}^{(#1,#2)}}
\newcommand{\controlRestrictRuntime}[2]{\mathcal{C}_{\mathtt{runtime}}^{(#1,#2)}}
\newcommand{\controlRestrictDeadlock}[2]{\mathcal{C}_{\mathtt{deadlock}}^{(#1,#2)}}
  \newcommand{\errorAssertConf}{\mathtt{err}_\mathtt{assert}}
      \newcommand{\errorRuntimeConf}{\mathtt{err}_\mathtt{run}}
      \newcommand{\raceConf}{\mathtt{err}_\mathtt{race}}
\newcommand{\sErrorConf}{\sConf_{\mathtt{err}}}
\newcommand{\sErrorConfAssert}{\sConf_{\mathtt{assertErr}}}
\newcommand{\sConfDeadlock}{\sConf_d}
\newcommand{\cConfSetD}[2]{\mathcal{D}^{#1}_{#2}}
\newcommand{\cConfSet}{\mathcal{C}}

\newcommand{\sConfSetAssert}[2]{\mathtt{badCstrs}_{\mathtt{assert}}^{(#1,#2)}}
\newcommand{\sConfSetRace}[2]{\mathtt{badCstrs}_{\mathtt{race}}^{(#1,#2)}}
\newcommand{\sConfSetRuntime}[2]{\mathtt{badCstrs}_{\mathtt{runtime}}^{(#1,#2)}}
\newcommand{\sConfSetDeadlock}[2]{\mathtt{badCstrs}_{\mathtt{deadlock}}^{(#1,#2)}}

\newcommand{\constraintOf}[1]{\mathtt{constraintOf}(#1)}

\newcommand{\entailedBy}{\sqsubseteq}

\newcommand{\cConfTuple}{\tuple{\thidSet,\phidSet,\boolMp,\pcMp,\varMp,\cPhaseMp}}

\newcommand{\sConfTuple}{\tuple{\thidSet,\phidSet,\boolMp,\pcMp,\varMp,\sPhaseMp}}

\newcommand{\denotationOf}[1]{[\![{#1}]\!]}

\newcommand{\thid}{t}
\newcommand{\othid}{u}
\newcommand{\oothid}{v}
\newcommand{\ooothid}{w}
\newcommand{\thidSet}{\mathpzc{T}}
\newcommand{\othidSet}{\mathpzc{U}}

\newcommand{\phid}{p}
\newcommand{\ophid}{q}
\newcommand{\oophid}{r}
\newcommand{\phidSet}{\mathpzc{P}}
\newcommand{\ophidSet}{\mathpzc{Q}}

\newcommand{\waitVal}[2]{wait_{#2}^{#1}}
\newcommand{\sigVal}[2]{sig_{#2}^{#1}}

\newcommand{\waitVar}[2]{\omega_{#2}^{#1}}
\newcommand{\sigVar}[2]{\sigma_{#2}^{#1}}

\newcommand{\vars}[2]{\chi_{#2}^{#1}}

\newcommand{\minD}[3]{min_{#3}^{{#1},{#2}}}
\newcommand{\maxD}[3]{max_{#3}^{{#1},{#2}}}

\newcommand{\reducesTo}[2]{\xrightarrow[#2]{#1}}

\newcommand{\preReducesTo}[1]{\stackrel{#1}{\rightsquigarrow}}


\newcommand{\Nodes}[1]{\mathpzc{N}_{#1}}
\newcommand{\taskNodes}{\Nodes{\thidSet}}
\newcommand{\ptaskNodes}{\Nodes{\thidSet'}}
\newcommand{\otaskNodes}{\Nodes{\othidSet}}
\newcommand{\phaserNodes}{\Nodes{\phidSet}}
\newcommand{\pphaserNodes}{\Nodes{\phidSet'}}
\newcommand{\ophaserNodes}{\Nodes{\ophidSet}}
\newcommand{\varNodes}[1]{\Nodes{#1}}

\newcommand{\undefPhaser}{\Node{\mathpzc{\bot}}}
\newcommand{\ephaserNodes}{{\phaserNodes^{\bot}}}

\newcommand{\Node}[1]{\mathpzc{n}_{#1}}
\newcommand{\oNode}[1]{\mathpzc{m}_{#1}}
\newcommand{\taskNode}{\mathpzc{t}}
\newcommand{\phaserNode}{\mathpzc{p}}
\newcommand{\varNode}{\mathpzc{v}}

\newcommand{\encodingOf}[1]{\mathtt{enc}(#1)}
\newcommand{\gEncodingOf}[1]{\mathtt{genc}(#1)}

\newcommand{\sgraph}{\mathpzc{sg}}

\newcommand{\gEdges}{\mathpzc{E}}

\newcommand{\gLabels}{\mathpzc{\lambda}}

\newcommand{\sLabels}{\mathpzc{\eta}}

\newcommand{\gTLabels}{\gLabels_{\thidSet}}
\newcommand{\gVLabels}{\gLabels_{V}}

\newcommand{\cGraph}{\mathpzc{cg}}

\newcommand{\taskBijection}{\tau}
\newcommand{\phaserBijection}{\pi}

\newcommand{\taskMapping}{\tau}
\newcommand{\phaserMapping}{\pi}

\newcommand{\sgEdges}{\gEdges}
\newcommand{\sGraph}{\mathpzc{sg}}
\newcommand{\sgLabels}{\gamma}
\newcommand{\sgTLabels}{\sgLabels_{\thidSet}}
\newcommand{\sgVLabels}{\sgLabels_{V}}
\newcommand{\sgPLabels}{\sgLabels_{P}}

\newcommand{\multiset}{m}

  \newcommand{\phaseOf}[1]{\mathtt{phaseOf}(#1)}


\newcommand{\sConfSetWorking}{\mathtt{Working}}
\newcommand{\sConfSetVisited}{\mathtt{Visited}}

\newcommand{\trace}{\tau}


\newcommand{\shiftLeft}[2]{\mathtt{shl}({#1}:{#2})}
\newcommand{\shiftRight}[2]{\mathtt{shr}({#1}:{#2})}  
\newcommand{\replaceIfEqual}[3]{\left({#1}={#3}?{#2}:{#3} \right)}
\newcommand{\ipost}[3]{\mathtt{post}_{#3}(#1,#2)}
\newcommand{\post}[2]{\mathtt{post}(#1,#2)}

\newcommand{\ipre}[3]{\mathtt{pre}_{#3}(#1,#2)}
\newcommand{\pre}[2]{\mathtt{pre}(#1,#2)}

\newcommand{\bijection}[1]{\theta_{#1}}
\newcommand{\renpc}[2]{\mathtt{ren}(#1,#2)}
\newcommand{\ren}[3]{\mathtt{ren}(#1,#2,#3)}
\newcommand{\renVal}[3]{\mathtt{renameVal}(#1,#2,#3)}
\newcommand{\renPc}[3]{\mathtt{renamePc}(#1,#2,#3)}
\newcommand{\renGraph}[3]{\mathtt{renameGraph}(#1,#2,#3)}
\newcommand{\identity}[1]{\mathtt{id}_{#1}}

\newcommand{\restrictVal}[2]{{#1}_{|#2}}
\newcommand{\restrictPc}[2]{{#1}_{|#2}}
\newcommand{\restrictGraph}[2]{{#1}_{|#2}}

\newcommand{\production}{\mathtt{produce}}
\newcommand{\inspection}{\mathtt{consume}}

\newcommand{\helping}{\mathtt{producing}}
\newcommand{\inspecting}{\mathtt{checking}}

\newcommand{\helper}{\mathtt{producer1}}
\newcommand{\ohelper}{\mathtt{producer2}}
\newcommand{\inspecter}{\mathtt{consumer}}

\newcommand{\waitMode}{\mathrm{\textsc{Wait}}}
\newcommand{\sigMode}{\mathrm{\textsc{Sig}}}
\newcommand{\sigWaitMode}{\mathrm{\textsc{Sig\_Wait}}}



\SetAlFnt{\footnotesize}
\SetProcFnt{\footnotesize}
\SetProcNameFnt{\footnotesize}

\usetikzlibrary{%
  arrows,%
  shapes,%
  positioning,%
  petri,%
  automata,%
  backgrounds%
}

\makeatletter
\newcommand*{\boxwedge}{%
  \mathbin{%
    \mathpalette\@boxwedge{}%
  }%
}
\newcommand*{\@boxwedge}[2]{%
  \sbox0{$#1\boxplus\m@th$}%
  \dimen2=.5\dimexpr\wd0-\ht0-\dp0\relax 
  \dimen@=\dimexpr\ht0+\dp0\relax
  \def\lw{.06}
  \kern\dimen2 
  \tikz[
    line width=\lw\dimen@,
    line join=round,
    x=\dimen@,
    y=\dimen@,
  ]
  \draw
    (\lw/2,0) rectangle (1-\lw,1-\lw)
    (\lw,0) -- (.5,1-\lw-\lw/2) -- (1-\lw-\lw/2 ,0)
  ;%
  \kern\dimen2 
}
\makeatletter

\makeatletter
\newcommand*{\owedge}{%
  \mathbin{%
    \mathpalette\@owedge{}%
  }%
}
\newcommand*{\@owedge}[2]{%
  \sbox0{$#1\oplus\m@th$}%
  \dimen2=.5\dimexpr\wd0-\ht0-\dp0\relax 
  \dimen@=\dimexpr\ht0+\dp0\relax
  \def\lw{.04}
  \def\radius{.5-\lw/2}%
  \kern\dimen2 
  \tikz[
    line width=\lw\dimen@,
    line join=round,
    x=\dimen@,
    y=\dimen@,
    baseline=\dimexpr-.5\dimen@+\dp0\relax,
  ]
  \draw
    (0,0) circle[radius=\radius]
    (225:\radius) -- (0,.5-\lw) -- (-45:\radius)
  ;%
  \kern\dimen2 
}
\makeatother

\newcommand{\intersect}{\owedge}

\newcommand{\headOf}[1]{\mathtt{hd}(#1)}
\newcommand{\tailOf}[1]{\mathtt{tl}(#1)}

\begin{abstract}
  We address the problem of statically checking control state
  reachability (as in possibility of assertion violations, race
  conditions or runtime errors) and plain reachability (as in
  deadlock-freedom) of {\em phaser programs}.
  Phasers are a modern non-trivial synchronization construct that supports
  dynamic parallelism with runtime registration and
  deregistration of spawned tasks. They allow for collective 
  and point-to-point synchronizations.
  For instance, phasers  can 
  enforce barriers or producer-consumer synchronization schemes 
  among all or subsets of the running tasks.
  Implementations 
  are found in modern languages such as X10 or
  Habanero Java.
  Phasers essentially associate phases to individual tasks and use
  their runtime values to restrict possible concurrent executions.
  Unbounded phases may result in infinite transition systems even in
  the case of programs only creating finite numbers of tasks and
  phasers.
  We introduce an exact gap-order based procedure that always terminates
  when checking control reachability for programs generating bounded
  numbers of coexisting tasks and phasers.
  We also show verifying plain reachability is undecidable even
  for programs generating few tasks and phasers.
  We then explain how to turn our procedure into a sound analysis for
  checking plain reachability (including deadlock freedom). 
  We report on preliminary experiments with our open source
  tool. 
\end{abstract}


%

\section{Introduction}
\label{sec:intro}

We focus on safety verification 
of programs using 
{\em phasers} for synchronization \cite{X10:2005,JDVW:phasers:2008,cave2011habanero}.
%
%
This sophisticated 
construct dynamically unifies
collective and point-to-point synchronizations.
%
%
%
For instance, it allows for dynamic registration and deregistration of
tasks allowing for a more balanced usage of the computing resources when
compared to static barriers or producer-consumer constructs
\cite{accumulator:applications:pdp09}.
%
%
%
%
%
The construct can be added to any parallel programming language with a shared 
address space. For instance, it can be found 
in Habanero Java \cite{cave2011habanero}, an extension of the Java
programming language.
%
%
Phasers build on the clock construct from the X10 programming language
\cite{X10:2005}. They can be created dynamically and spawned tasks may
get registered or deregistred at runtime.
%
%

Intuitively, each phaser associates two phases (hereafter {\em wait}
 and {\em signal} phases) to each registered task.
Apart from creating phasers and registering each other to them, tasks can
individually issue $\mathtt{wait}$ and $\mathtt{signal}$ commands to
a phaser they are registered to.
Intuitively, $\mathtt{signal}$ commands are used to inform other
registered tasks the issuing task is done with its {\em
  signal} phase. The command is non-blocking. It increments the {\em
  signal} phase associated to the issuing task on the given phaser.
The $\mathtt{wait}$ command is instead used to check whether all
registered tasks are
done with (i.e., have a {\em signal} phase that is strictly larger than)
the issuing task's {\em wait} phase.
This command may get blocked by a task that did not yet finish the
corresponding phase.
Unlike classical barriers, phasers need not force registered tasks to
wait for each other at each single phase. Instead they allow them to
proceed with the following phases (by issuing $\mathtt{signal}$
commands), or even to exit the construct by deregistering from it.
%
Such dynamic behavior allows for better load balancing and
performance, but comes at the price of making it easy to introduce
programming mistakes such as assertion violations, race conditions,
runtime errors and, in the important situation where wait and signal
commands are decoupled for maximum flexibility, deadlocks.
%
%
%
%
We summarize our contributions in this work: 
\begin{itemize}
\item We propose an operational model based on 
  \cite{JDVW:phasers:2008,cave2011habanero,Cogumbreiro:dynamicverifphasers:2015}.
\item We show 
  undecidability of checking deadlock-freedom 
 for 
  programs with 
  fixed numbers of tasks and phasers.
\item We describe an exact gap-order based symbolic verification
  procedure for checking control state reachability
  (as in assertion violations, race
  conditions or runtime errors)
  and plain reachability
  (as in checking deadlock freedom).
\item We show termination of the procedure for 
  control state
  reachability when numbers of tasks and phasers
  are fixed. 
\item We describe how to turn the procedure into a sound
  over-approximation for plain reachability.
\item We report on our preliminary
  experiments with our open source tool.
\end{itemize}

Related work.
%
We are not aware of automatic formal verification works
that focus on constructs allowing for such a degree of dynamic parallelism.
%
Unlike
\cite{Le:barriers:permissions:2013}, 
we focus on fully automatic verification and consider the richer
and more challenging phaser construct.
%
%
The work of 
\cite{Cogumbreiro:dynamicverifphasers:2015} considers
the dynamic verification of phaser programs and can therefore
only reason about particular program inputs and runs.
The work in \cite{Anderson:jpf:hj:2014} uses Java Path Finder \cite{havelund:jpf:2000}
to explore several runs, but still for one concrete input at a time.
%
%
The works in \cite{FI16:Mayr:Totzke,TCS14:BozzelliPinchinat} target
gap-order systems. Although phaser programs share some of their
properties (larger gaps can do more), the results in \cite{FI16:Mayr:Totzke,TCS14:BozzelliPinchinat}
do not apply since gap-order
systems crucially forbid exact increments.
%
%

Outline.
We describe a phaser program and 
%
recall some preliminaries in Sections~\ref{sec:example} and \ref{sec:prels}. 
This is followed in Section~\ref{sec:language} by a formal description
of phaser programs and of the properties we want to check. 
We also establish 
the undecidability of checking deadlock freedom.
In Section~\ref{sec:symbolic}, we introduce our gap-order based
symbolic representation. In Section~\ref{sec:proc} we describe our  verification
procedure and show decidability of checking 
control state reachability.
%
We then introduce our relaxation procedure for checking plain reachability and briefly describe in
Section~\ref{sec:param} our
applicatoin of view abstraction to the parameterized case.
Finally, we report on our experiments and conclude the work.

\lstset{ %
language=Java,                
basicstyle=\ttfamily\scriptsize,       
numbers=left,                   
numberstyle=\scriptsize,      
stepnumber=1,                   
numbersep=-7pt,                  
showspaces=false,               
showstringspaces=false,         
showtabs=false,                 
frame=single,	                
tabsize=2,	                
mathescape=true,
captionpos=b,                   
breaklines=true,                
breakatwhitespace=false,        
escapeinside={\%*}{*)},          
keywords={bool},
keywordstyle=\color{blue}\ttfamily,
commentstyle=\color{olive}\ttfamily,
emph={newPhaser,signal,wait,asynch,drop},
emphstyle=\color{magenta}\ttfamily
}

\begin{figure*}
\begin{tabular}{c}
\begin{minipage}{.9\textwidth}
\begin{lstlisting}
   bool a, b, done;  
   main()
   {
     done = $\false$;  
     prod = newPhaser($\sigWaitMode$);
     cons = newPhaser($\sigWaitMode$);
     cons.signal();
   
     asynch(aProducer, prod($\sigMode$), cons($\waitMode$));
     asynch(bProducer, prod($\sigMode$), cons($\waitMode$));   
     asynch(abConsumer, prod($\waitMode$), cons($\sigMode$));
   
     prod.drop();
     cons.drop();
   }

   aProducer(p($\sigMode$), c($\waitMode$))
   {
     c.wait();
     while($\neg$done){
       a = $\true$;
       p.signal();
       c.wait();
     };
     p.drop();
     c.drop();
   }
\end{lstlisting}
\end{minipage}
\\
\begin{minipage}{.9\textwidth}
{
  \begin{lstlisting}[firstnumber=28]
    bProducer(p($\sigMode$), c($\waitMode$))
    {
      c.wait();
      while($\neg$done){
        b = $\true$;
        p.signal();
        c.wait();
      };
      p.drop();
      c.drop();
    }
   
   abConsumer(p($\waitMode$), c($\sigMode$))
   {
      while($\neg$done){
       p.wait();
       assert(a $\wedge$ b);
       a = $\false$;
       b = $\false$;

       if($\ndet$)
        done = $\true$;
       c.signal();
     };
     c.drop();
     p.drop();
   }
  \end{lstlisting}
  }
\end{minipage}
\end{tabular}
\caption{Two producers and one consumer are synchronized using two
phasers. In this construction, the consumer requires both producers to
be ahead of it (wrt. \texttt{prod} phaser) in order for it to
consume their respective products. At the same time,
the consumer needs to be ahead of both producers (wrt. \texttt{cons}
phaser) in order for these to be able to produce their pair of
products.}
\label{fig:aggregators}
\end{figure*}

\section{Motivating example}
\label{sec:example}

The program listed in Fig.~(\ref{fig:aggregators}) uses Boolean shared
variables $\bVarSet=\set{\mathtt{a},\mathtt{b},\mathtt{done}}$.
A {\em main} task creates two phasers (lines 5 and 6).  When creating
a phaser, the task gets automatically registered to it. The main task
also creates three other task instances (lines 9, 10 and 11).
Several tasks can be registered to several phasers. 
When a task $\thid$ is registered to a phaser $\phid$, a pair of
numbers $(\waitVal{\thid}{\phid},\sigVal{\thid}{\phid})$, each in
$\nats\cup\set{+\infty}$, is associated to the couple $(\thid,\phid)$.
%
The pair represents the individual
{\em wait} and {\em signal} phases of task $\thid$ on phaser $\phid$.

Registration of a task to a phaser can occur in one of three modes:
$\sigWaitMode$, $\waitMode$ and $\sigMode$.
In $\sigWaitMode$ mode, 
a task may issue
both $\mathtt{signal}$ and $\mathtt{wait}$ commands. 
In $\waitMode$ mode, 
a task may only issue
$\mathtt{wait}$ commands on the phaser. 
%
Finally, when registered in $\sigMode$ mode, 
a task may only issue
$\mathtt{signal}$ commands. 
Issuing a $\mathtt{signal}$ command by a task
on a phaser results in the task
incrementing its signal phase associated to the phaser. This command
is non-blocking.
On the other-hand, issuing a $\mathtt{wait}$ command by a task
on a phaser
$\phid$ will block until {\bf all} tasks registered on $\phid$ 
exhibit signal values on $\phid$ that are strictly
larger than the wait value of the issuing task on phaser $\phid$.
In this case, the wait phase 
of the issuing task is incremented. 
Intuitively, a signal command allows the issuing task to state 
other tasks need not wait for it to complete its 
signal phase. 
In retrospect, a $\mathtt{wait}$ command allows a task to make sure all
registered tasks have moved past its wait phase. 

\begin{figure}[h]
{\scriptsize
\begin{center}
\begin{tikzpicture}
\draw (0,2) node[ellipse, draw=green!50,fill=green!20,thick] (p1)   {$\mathtt{aProducer:@23}$};

\draw (0,-2.3) node[ellipse, draw=green!50,fill=green!20,thick, minimum size = 5mm] (p2)   {$\mathtt{bProducer:@34}$};

\draw (0,1) node[ellipse, draw=green!50,fill=green!20,thick] (m)   {$\mathtt{main:@15}$};

\draw (0,0) node[ellipse, draw=green!50,fill=green!20,thick, minimum size = 5mm] (c)   {$\mathtt{abConsumer:@42}$};

\draw (-4,0) node[rectangle, draw=red!50,fill=red!20,thick] (prodPh)   {$\mathtt{(phaser)}$};

\draw (4,0) node[rectangle, draw=red!50,fill=red!20,thick, minimum size = 1mm] (inspectPh)   {$\mathtt{(phaser)}$};

\path[->]
(m) edge  node[rectangle,draw=white,fill=white] {$\mathtt{prod}$} (prodPh)
(m) edge  node[rectangle,draw=white,fill=white] {$\mathtt{cons}$} (inspectPh)
(p1) edge [bend right] node[rectangle,draw=white,fill=white] {$\mathtt{p}$} (prodPh)
(prodPh) edge  node[rectangle,draw=white,fill=white] {$(\sigVal{}{}=90)$} (p1)
(p2) edge [bend left] node[rectangle,draw=white,fill=white] {$\mathtt{p}$} (prodPh)
(prodPh) edge  node[rectangle,draw=white,fill=white] {$(\sigVal{}{}=91)$} (p2)
(c) edge  [bend left] node[rectangle,draw=white,fill=white] {$\mathtt{p}$} (prodPh)
(prodPh) edge  node[rectangle,draw=white,fill=white] {$(\waitVal{}{}=90)$} (c)
(p1) edge [bend left] node[rectangle,draw=white,fill=white] {$\mathtt{c}$} (inspectPh)
(inspectPh) edge node[rectangle,draw=white,fill=white] {$(\waitVal{}{}=90)$} (p1)
(p2) edge [bend right] node[rectangle,draw=white,fill=white] {$\mathtt{c}$} (inspectPh)
(inspectPh) edge  node[rectangle,draw=white,fill=white] {$(\waitVal{}{}=91)$} (p2)
(c) edge [bend right] node[rectangle,draw=white,fill=white] {$\mathtt{c}$} (inspectPh)
(inspectPh) edge node[rectangle,draw=white,fill=white] {$(\sigVal{}{}=91)$} (c) ;
\end{tikzpicture}
\end{center}
}
\label{fig:cConf}
\caption{Possible wait and signal
phase values for Fig.~(\ref{fig:aggregators}).
Observe that there is no a priori bound on the values of the different wait and signal phases.
In this example, the difference between signal and wait phases is bounded. This is not
always the case in general.
%
}
\end{figure}
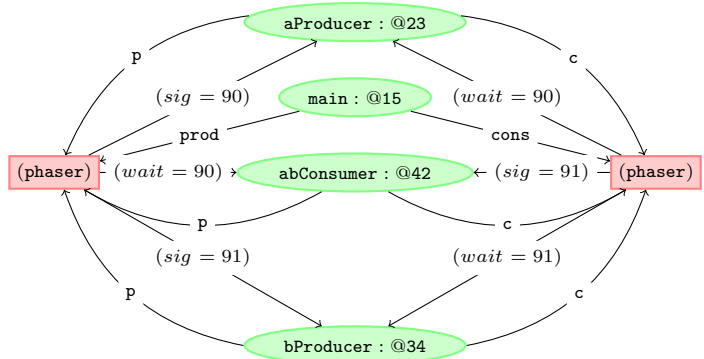

Upon creation of a phaser, wait and signal phases are initialized to
$0$ (except in $\waitMode$ mode where the signal phase is instead initialized
to $+\infty$ in order to not block other waiters).
The only other way a task may get registered to a phaser is if 
an already registred task does register it in the same mode 
(or in $\waitMode$ or $\sigMode$ if the registrar is registered in
$\sigWaitMode$).
In this case, wait and signal phases of the newly registered task are
initialized to those of the registrar. 
Tasks are therefore dynamically registered (e.g., lines
9-11).
They can also dynamically deregister themselves (e.g.,
lines 25-26);

In this example, two producers and one consumer are synchronized using
two phasers. The consumer requires the two producers to be ahead of it
(wrt. the phaser main pointed to with $\mathtt{prod}$) in order
for it to consume their respective products. 
%
At the same time,
the consumer needs to be ahead of both producers (wrt. the 
phaser main pointed to with $\mathtt{cons}$) in order for these to 
produce their pair of
products. 
%
%
It should be clear that phasers can be used as barriers for
synchronizing dynamic subsets of concurrent tasks.
%
%
Observe producers need not, in general, proceed in a lock step
fashion. Producers may produce many items before 
consumers ``catch up''.

%

We are interested in checking: (a) control state reachability as in
assertions (e.g., line 44), race conditions (e.g., mutual exclusion of
lines 20 and 49) or runtime errors (e.g., signaling a dropped
phaser), and (b) plain reachability as in deadlocks (e.g., a producer
at line 23 and a consumer at line 50 with equal phases).
Intuitively, both problems concern themselves with the reachability of
target sets of program configurations. The difference is that control
state reachability defines the targets with the states of the tasks
(their control locations and whether they are registered to some
phasers).
Plain reachability can, in addition,  use values or relations between values
of involved phases. 
Observe that control state reachability depends on the values of the actual
phases, but these values are not used to define the target sets.
For example, assertions are expressed as predicates over boolean variables (e.g., line 44).
Establishing such an assertion requires capturing the constraints imposed by the
phasers on the program behaviors.

Our work proposes a sound and complete algorithm for checking control
state reachability in case a bounded number of tasks and phasers are
generated. The algorithm can handle arbitrarily large phases,
e.g., generated using nested signaling loops.
%
The algorithm  starts from a symbolic representation of all bad
configurations and successively computes sets of predecessor
configurations.
We show termination based on a well-quasi-ordering argument that
imposes restrictions on what can be expressed with our symbolic
representation. For instance  putting upper bounds on differences
between phases is forbidden.
Deadlock configurations cannot be faithfully captured with such restricted
representations. 
Intuitively, a deadlocked configuration will have a cycle where each
involved task is waiting for the task to its right but where the wait
phase of each task equals the signal phase of the task it is waiting
for. 
%
%
We show the problem of checking deadlock freedom to be undecidable
even for programs only generating a bounded number of tasks and phasers. We
explain how to turn our verification algorithm into a sound but
incomplete procedure for checking deadlock-freedom.
Precision can then be augmented on demand to eliminate 
false positives.
%

\section{Preliminaries}
\label{sec:prels}

We use $\nats$ and $\ints$ for natural and
integer numbers respectively.
We write $A\uplus B$ to mean the union of disjoint sets $A$ and $B$.
We let $\pFunctionsOf{A}{B}$ be the set of partial functions from
$A$ to $B$ and use $\emptyFunction{A}{}$ for the empty function over $A$, 
i.e., $\emptyFunction{A}{}(a)$ is undefined (written
$\emptyFunction{A}{}(a)\undefined$) for all $a\in A$. Given function
$g\in\pFunctionsOf{A}{B}$
we write $g(a)\defined$ to mean that
$g(a)$ is defined and write $g\setminus\set{a}$ to mean the restriction of
$g$ to the domain $A\setminus\set{a}$.
%
We write $\mapSubst{g}{a}{b}$ for the function that
coincides with $g$ on $A$ except for $a$ that is sent to $b$.
We abuse  notation and let
$\substSet{g}{\setcomp{a_i\leftarrow b_i}{i\in I}}$,
for a set $\setcomp{a_i}{i\in I}$ of pairwise
different elements, mean the function that coincides with $g$ on
$A$ except for each $a_i$ that is sent to the corresponding $b_i$. 
We sometimes write a function $g$ as a set $\setcomp{a\mapsto
  g(a)}{a\in A}$. It is then implicitly undefined outside of $A$.

\section{Language}
\label{sec:language}


%
A program may use a set $\bVarSet$ of shared boolean variables
and a set $\phVarSet$ of local phaser variables:
%
%
%
%
%
%

{
$$
\begin{array}{lcl}
  \program & ::= & \mathtt{bool}~~ \bvar_1, \ldots, \bvar_{|\bVarSet|};  \\
  & & \task_1(\phvar_{1_1},\ldots,\phvar_{k_1}) \set{\stmt_1} \\
  & & \ldots \\
  & & \task_n(\phvar_{1_n},\ldots,\phvar_{k_n}) \set{\stmt_n}  \\
  \\
  \stmt & ::= & ~~ \phvar = \newp 
  \grammarOr \asynch{\task}{\phvar_1,\ldots,\phvar_k}\\ 
  && 
  \grammarOr \dereg{\phvar} \grammarOr \sig{\phvar}
  \grammarOr \wait{\phvar}  \grammarOr \exit\\
  && 
  ~\grammarOr \stmt;\stmt \grammarOr \bvar=\cond \grammarOr\assert{\cond} \\
  && \grammarOr \while{\cond}{\stmt}
  \grammarOr \ifp{\cond}{\stmt} \\
  \cond & ::= &  \ndet \grammarOr  \true
  \grammarOr \false 
  \grammarOr \bvar
  \grammarOr \cond \vee \cond\\
  && \grammarOr \cond \wedge \cond
  \grammarOr \neg\cond 
\end{array}
$$
}
A program consists in a set of tasks $\tasks$.
A task is declared with $\task(\phvar_{1},\ldots,\phvar_{k})
\set{\stmt}$ where $\phvar_1, \ldots \phvar_k$ are phaser variables
that are local to the declared task. A task can also
create a new phaser with $\phvar=\newp$ and store
the identifier of the phaser in a local variable $\phvar$. We let
$\phVarSet$ be the union of all local phaser variables. 
When creating a phaser, a task gets registered to it.
To simplify our description, we will assume all registrations to be in
$\sigWaitMode$ mode. Including the other modes is a matter of changing
the initial phase values at registration and of statically ensuring the issued
commands respect the registration mode.
%
%
%
%
A task can deregister itself from a phaser referenced by a variable
$\phvar$ with $\dereg{\phvar}$.
It can also issue signal or wait commands on a phaser on which it is registered
and that is referenced by $\phvar$.
A task can spawn another task with
$\asynch{\task}{\phvar_1,\ldots,\phvar_n}$. The issuing task registers
the spawned task to the phasers it points to with
$\phvar_1,\ldots,\phvar_n$.  The issuing task need not wait for the
spawned task and may directly continue  its execution.
%
%

Assume a phaser program $\program=\programTuple$.
We inductively define the finite set $\seqSet$ of control sequences as
follows.
%
%
$\seqSet$ is the smallest set containing:
%
%
(i) suffixes of each ``$\stmt_i$'' appearing in some
``$\task_i(\phvar_{1_i},\ldots,\phvar_{k_i})\set{\stmt_i}$''; and
(ii) suffixes of ``$\stmt_i;\while{\cond}{\stmt_i};\stmt_j$''
(respectively ``$\stmt_i;\while{\cond}{\stmt_i}$'') for each
``$\while{\cond}{\stmt_i};\stmt_i$'' (respectively
``$\while{\cond}{\stmt_i}$'') in $\seqSet$;
and (iii) suffixes of ``$\stmt_i;\stmt_j$'' (respectively ``$\stmt_i$'') for
each ``$\ifp{\cond}{\stmt_i};\stmt_j$''
(respectively ``$\ifp{\cond}{\stmt_i}$'') appearing in $\seqSet$.
%
%
We write $\seqVal$ to mean some control sequence in $\seqSet$, and
$\headOf{\seqVal}$ and $\tailOf{\seqVal}$ to respectively mean the head
and the tail of the sequence $\seqVal$.

\subsection{Semantics.}

A configuration $\cConf$ of 
$\program=\tuple{\bVarSet,\phVarSet,\tasks}$ is a tuple
$\cConfTuple$ where:


\begin{itemize}
\item $\thidSet$ is the current finite set  of
  task identifiers. We let $\thid,\othid$  range over the values in $\thidSet$.
\item $\phidSet$ is the current  finite  set of phaser identifiers. We let
  $\phid,\ophid$  range over the values in
  $\phidSet$.
\item $\boolMp:\bVarSet \to \set{\true,\false}$ is a total mapping
  that associates a value to each $\bvar\in\bVarSet$.
\item $\pcMp:\thidSet \to \seqSet$ is a total mapping that
  associates tasks to their remaining sequences (i.e., control location). 
\item $\varMp:\thidSet \to
  \pFunctionsOf{\phVarSet}{\phidSet}$ is a total mapping that
  associates, to each task identifier in $\thidSet$, a partial
  mapping from the local phaser variables $\phVarSet$ to phaser identifiers
  $\phidSet$. It captures the values of the phaser variables $\phVarSet$
  of each task.
\item $\cPhaseMp:\phidSet\to
  \pFunctionsOf{\thidSet}{\nats^2}$ is a total mapping
  that associates to each phaser $\phid\in\phidSet$ a partial
  mapping $\cPhaseMp(\phid)$ that is defined exactly on the identifiers
  of the tasks registered to $\phid$. For such a task
  $\thid$, $\cPhaseMp(\phid)(\thid)$ is the pair
  $(\waitVal{\thid}{\phid},\sigVal{\thid}{\phid})$ 
  representing wait and signal values of $\thid$ on $\phid$.
\end{itemize}
\begin{figure}[h]
  \begin{flushright}
    \scriptsize
    $$
    \begin{array}{c}
      \infer[\mathtt{(newPhaser)}]
            {\cConfTuple \reducesTo{\thid}{}\tuple{\thidSet,\phidSet',\boolMp,\pcMp',\varMp',\cPhaseMp'}}
            {
              \begin{array}{c}
                \headOf{\pcMp(\thid)}=\phvar:=\newp  ~~\wedge~ \phid\not\in\phidSet ~\wedge~
                \phidSet'=\phidSet\cup\set{\phid} ~\wedge~ \varMp'=\mapSubst{\varMp}{\thid}{\mapSubst{\varMp(\thid)}{\phvar}{\phid}} \\
                \wedge~ \cPhaseMp'=\mapSubst{\cPhaseMp}{\phid}{\set{\thid \mapsto (0,0)}} \wedge \pcMp'=\mapSubst{\pcMp}{\thid}{\tailOf{\pcMp(\thid)}} 
              \end{array}
            }            
            \\
            \\
       \infer[\mathtt{(signal)}]
             {\cConfTuple \reducesTo{\thid}{}
               \tuple{\thidSet,\phidSet,\boolMp,\mapSubst{\pcMp}{\thid}{\tailOf{\pcMp(\thid)}},\cPhaseMp'} }
             {\begin{array}{c}
                 \headOf{\pcMp(\thid)}=\sig{\phvar} ~ \wedge 
                 \varMp(\thid)(\phvar)=\phid ~\wedge~ 
                 \cPhaseMp(\phid)(\thid)=(\waitVal{\thid}{\phid},\sigVal{\thid}{\phid}) ~\wedge \\
                 \cPhaseMp'=\mapSubst{\cPhaseMp}{\phid}
                                     {\mapSubst{\cPhaseMp(\phid)}{\thid}{(\waitVal{\thid}{\phid},1+\sigVal{\thid}{\phid})}}
             \end{array}}
      \\
      \\
      \infer[\left(\!\!\!\!\begin{array}{c}\mathtt{assert.}  \mathtt{ok}\end{array}\!\!\!\!\right)]
            {\cConfTuple \reducesTo{\thid}{}
              \tuple{\thidSet,\phidSet,\boolMp,\mapSubst{\pcMp}{\thid}{\tailOf{\pcMp(\thid)}},\varMp,\cPhaseMp}}
            {
              \begin{array}{c}
                \headOf{\pcMp(\thid)}=\assert{\cond} ~ \wedge ~ 
                \boolMp(\cond)=\true
                \end{array}
            }
            \\
            \\
       \infer[\mathtt{(drop)}]
             {\cConfTuple \reducesTo{\thid}{}
               \tuple{\thidSet,\phidSet,\boolMp,\mapSubst{\pcMp}{\thid}{\tailOf{\pcMp(\thid)}},\varMp,\cPhaseMp'} }
             {\begin{array}{c}
                 \headOf{\pcMp(\thid)}=\dereg{\phvar} ~ \wedge 
                 \varMp(\thid)(\phvar)=\phid ~\wedge~ 
                 \cPhaseMp(\phid)(\thid)\defined ~\wedge~ 
                 \cPhaseMp'=\mapSubst{\cPhaseMp}{\phid}
                                     {\mapSubst{\cPhaseMp(\phid)}{\thid}{\undefined}}
               \end{array}
             }
       \\
       \\
         \infer[\mathtt{(asynch)}]
               {\cConfTuple \reducesTo{\thid}{}
                 \tuple{\thidSet,\phidSet,\boolMp,\mapSubst{\pcMp'}{\thid}{\tailOf{\pcMp(\thid)}},\varMp,\cPhaseMp'}}
               {
                 \begin{array}{c}
                   \headOf{\pcMp(\thid)}=\asynch{\task}{\phvar_1,\ldots \phvar_k}\{\seqVal_1\} ~\wedge~ 
                   \parametersOf{\task}=(\ophvar_1,\ldots \ophvar_k) ~\wedge~ \\
                   \textrm{for each } i:1\leq i\leq k.~\varMp(\thid)(\phvar_i)=\phid_i ~\wedge~ \cPhaseMp(\phid_i)\defined ~\wedge \\  
                   \othid\not\in\thidSet ~\wedge~ 
                   \varMp'=\mapSubst{\varMp}{\othid}{\setcomp{\ophvar_i\mapsto \varMp(\thid)(\phvar_i)}{1\leq i\leq k}} ~\wedge~ 
                   \pcMp'=\mapSubst{\pcMp}{\othid}{\seqVal_1} ~\wedge~ \\
                   \cPhaseMp'=\substSet{\cPhaseMp}
                                       {\setcomp{\phid_i\leftarrow
                                           \mapSubst{\cPhaseMp(\phid_i)}{\othid}{\cPhaseMp(\phid_i)(\thid)}
                                         }
                                         {\varMp(\thid)(\phvar_i)=\phid_i \textrm{ for } \phid_i\in\phidSet\textrm{ and }1\leq i\leq k}}
                 \end{array}
               }
       \\
       \\
                     \infer[\mathtt{(wait)}]
                           {\cConfTuple \reducesTo{\thid}{}
                             \tuple{\thidSet,\phidSet,\varMp,\mapSubst{\pcMp}{\thid}{\tailOf{\pcMp(\thid)}},\cPhaseMp'} }
                           {\begin{array}{c}
                               \headOf{\pcMp(\thid)}=\wait{\phvar} ~\wedge~
                               \varMp(\thid)(\phvar)=\phid ~\wedge~\cPhaseMp(\phid)(\thid)=(\waitVal{\thid}{\phid},\sigVal{\thid}{\phid}) ~\wedge~
                               \\
                               \forall \othid\in\thidSet.~
                               \left(\cPhaseMp(\phid)(\othid)=(\waitVal{\othid}{\phid},\sigVal{\othid}{\phid})
                               \Rightarrow \waitVal{\thid}{\phid} < \sigVal{\othid}{\phid} \right) ~\wedge~
                               \cPhaseMp'=\mapSubst{\cPhaseMp}{\phid}
                                                   {\mapSubst{\cPhaseMp(\phid)}{\thid}{(1+\waitVal{\thid}{\phid},\sigVal{\thid}{\phid})}}
                           \end{array}}
       \\
       \\
                     \infer[\mathtt{(exit)}]
                           {\tuple{\thidSet,\phidSet,\boolMp,\pcMp,\varMp,\cPhaseMp} \reducesTo{\thid}{}
                             \tuple{\thidSet\setminus\set{\thid},\phidSet,\boolMp,\pcMp',\varMp',\cPhaseMp'}}
                           {
                             \begin{array}{c}
                               \headOf{\pcMp(\thid)}=\exit ~\wedge~ 
                               \varMp'=\varMp\setminus\set{\thid} ~\wedge~ 
                               \pcMp'=\pcMp\setminus\set{\thid} ~\wedge~ 
                               \cPhaseMp'=\substSet{\cPhaseMp}{\setcomp{\phid \leftarrow (\cPhaseMp(\phid)\setminus\set{\thid})}{~\phid\in\phidSet}}
                             \end{array}
                           }
       \\
       \\
%
       \infer[\left(\!\!\!\!\begin{array}{c}\mathtt{assert.} \\ \mathtt{fault}\end{array}\!\!\!\!\right)]
            {\cConfTuple \in 
              \cConfSetAssert{|\thidSet|}{|\phidSet|}}
            {
              \begin{array}{c}
              \headOf{\pcMp(\thid)}=\assert{\cond} ~ \wedge ~
              \boolMp(\cond)=\false
              \end{array}
            }
            \\
            \\
         \infer[\left(\!\!\!\!\begin{array}{c}\mathtt{runtime} \\ \mathtt{error} \end{array}\!\!\!\!\right)]
               {\cConfTuple \in
                 \cConfSetRuntime{|\thidSet|}{|\phidSet|} }
               {\begin{array}{c}
                     \headOf{\pcMp(\thid)}=\seqVal ~\wedge 
                   \left(
                   \seqVal=\dereg{\phvar} ~  
                   \vee ~  \seqVal= \sig{\phvar} ~ \right. \\
                   \left. \vee ~\seqVal =\wait{\phvar} ~
                   \vee ~\seqVal = \asynch{\task}{\ldots,\phvar,\ldots}~
                   \right) 
                   ~\wedge~ 
                   \left( \varMp(\thid)(\phvar)\undefined ~\vee~  
                   \cPhaseMp(\varMp(\thid)(\phvar))(\thid)\undefined \right)
               \end{array}}
                      \\
                      \\                      
       \infer[(\mathtt{deadlock})]
               {\cConfTuple \in
                 \cConfSetDeadlock{|\thidSet|}{|\phidSet|} }
               {\begin{array}{c}
                   \set{\thid_0,\ldots\thid_n}\subseteq\thidSet \wedge \set{\phid_0,\ldots\phid_n}\subseteq\phidSet ~\wedge ~ 
                   \forall ~ i:0\leq i\leq n. ~ 
                   \headOf{\pcMp(\thid_i)}=\wait{\phvar_i}
                   ~\wedge~ \\
                   \varMp(\thid_i)(\ophvar_i)=\phid_{(i+1)\%n} 
                   \wedge 
                   \varMp(\thid_i)(\phvar_i)=\phid_i ~\wedge~  
                    \waitVal{\thid_i}{\phid_{(i+1)\%n}} \geq \sigVal{\thid_{(i+1)\%n}}{\phid_{(i+1)\%n}}
               \end{array}}
               \\
               \\
       \infer[\mathtt{(race)}]
             {\cConfTuple \in
               \cConfSetRace{|\thidSet|}{|\phidSet|}}
             {
               \begin{array}{c}
               \headOf{\pcMp(\thid)}=\bvar:=\cond ~ \wedge \headOf{\pcMp(\othid)}=\seqVal' \wedge~ \thid \neq \othid~\wedge \\
               \left(
               \begin{array}{c}
                 \seqVal'=\bvar:=\cond' \vee  \bvar \textrm{ appears in } \cond' \textrm{ and }  \\
                 \left(
                 \begin{array}{c}
                   (\seqVal'=\ifp{\cond'}{\stmt} \vee  \\
                   \seqVal'=\while{\cond'}{\stmt'} \vee \\
                   \seqVal'=\assert{\cond'} \vee \seqVal'=\bvar'=\cond')
                 \end{array}
                 \right)
               \end{array}
               \right)
               \end{array}
             }
    \end{array}
    $$
  \end{flushright}

  \caption{Operational semantics of phaser statements.}
  \label{fig:semantics}
\end{figure}

The set of tasks $\thidSet$ is altered by
$\asynch{\task}{\phvar_1,\ldots,\phvar_n}$ and $\exit$ statements
(rules $\mathtt{(asynch)}$ and $\mathtt{(exit)}$ in Fig.(\ref{fig:semantics}) of the appendix).
The set of phasers $\phidSet$ is updated upon creation of
new phasers (rule $\mathtt{(newPhaser)}$ in Fig.(\ref{fig:semantics})) of the appendix. 
The mapping $\varMp$ associates values to program phaser variables. 
Accessing variables with undefined values, or phasers to which
the task is not currently registered, leads to runtime errors (rule $\mathtt{(runtime~error)}$).
The total mapping $\cPhaseMp$ captures states of phasers. 
It associates to each phaser identifier $\phid$ in $\phidSet$ a
partial mapping $\cPhaseMp(\phid)$. This partial mapping is defined
for a task identifier $\thid\in\thidSet$ (i.e.,
$\cPhaseMp(\phid)(\thid)\defined$) iff the task $\thid$ is registered
to the phaser $\phid$. In this case, $\cPhaseMp(\phid)$ gives the
waiting phase $\waitVal{\thid}{\phid}$ and the signaling phase
$\sigVal{\thid}{\phid}$ of the task $\thid$ on the phaser $\phid$.
Initially, a unique ``main'' task $\mathtt{t_0}$ starts executing its
$\stmt_{main}$ with no phasers. $\cPhaseMp$ is the empty function with
an empty domain $\emptyFunction{\emptyset}{}$.  After a task $\thid$
executes a $\phvar := \newp$ statement (rule $\mathtt{(newPhaser)}$ in
Fig.(\ref{fig:semantics}))  of the appendix, a new phaser $\phid$ is associated to the
variable $\phvar$ using $\varMp$ and $\cPhaseMp(\phid)$ becomes the
partial function $\set{\thid\mapsto (0,0)}$.
%
%
The initial configuration is 
$\cConf_{init}=\tuple{\set{\mathtt{t_0}},\set{},\boolMp_{\false},\set{\mathtt{t_0}\mapsto \stmt},\emptyFunction{}{},\emptyFunction{}{}}$,
where a ``main'' task with identifier $\mathtt{t_0}$ and code $\stmt$ is the unique initial
task. No phasers are present in the initial configuration,
and all boolean variables are mapped to $\false$.

Given two configurations $\cConf$ and $\cConf'$ with
$\cConf=\cConfTuple$, we write $\cConf\reducesTo{\thid}{}\cConf'$ if there
is a task $\thid\in\thidSet$ such that one of the rules in
Fig.(\ref{fig:semantics})  of the appendix holds. 
We use $\reducesTo{*}{}$ for the reflexive transitive closure of $\reducesTo{}{}$
and write $\cConf\reducesTo{*}{}\cConf'$ to mean that $\cConf'$ is reachable from
$\cConf$.
A configuration is said reachable if it is reachable from the initial
configuration $\cConfInit$.

\subsubsection{Control-state reachability}
%
%
%
Checking the possibility of assertion violations, 
of runtime errors and of race conditions amounts to checking reachability
of configurations respectively in $\cConfSetAssert{n}{p}$, $\cConfSetRuntime{n}{p}$ and in $\cConfSetRace{n}{p}$
for some number of tasks $n$ and number of phasers $p$.
We introduce in Section \ref{sec:symbolic} a complete procedure
for checking reachability of such sets of configurations and show it to be sound
for programs with a fixed
upper bounds on numbers of generated phasers and tasks.

\subsubsection{Plain reachability and deadlocks.}
We are also interested in checking the possibility of deadlocks.
For this we need to define the notion of a blocked
task.
Assume in the following a configuration $\cConf=\cConfTuple$.  
\begin{definition}[Blocked]
  A task $\thid\in\thidSet$ is blocked at phaser $\phid\in\phidSet$ by
  task $\othid\in\thidSet$ if $\headOf{\pcMp\mathtt{(\thid)}}={\wait{\phvar}}$
  with $\varMp(\thid)({\phvar})=\phid$ and
  $\cPhaseMp(\phid)(\thid)=(\waitVal{\thid}{\phid},\_)$ when
  $\cPhaseMp(\phid)(\othid)=(\_,\sigVal{\othid}{\phid})$ and
  $\sigVal{\othid}{\phid} \leq \waitVal{\thid}{\phid} $.
\end{definition}

Intuitively, a task $\thid$ is blocked by a task $\othid$ if  it
cannot finish its $\mathtt{wait}$ command on some phaser 
because it is waiting for task $\othid$ that did not
issue enough $\mathtt{signal}$ commands on the same phaser.

\begin{definition}[Deadlock]
  $\cConfTuple$ is a deadlock configuration if each task of a non empty 
  subset $\othidSet\subseteq\thidSet$ is blocked by some task in $\othidSet$.
\end{definition}

\begin{theorem}[Deadlock-Freedom]
  \label{th:reachability}
  It is undecidable in general, even for programs with only
  three phasers and four tasks,
  to check for deadlock-freedom. 
\end{theorem}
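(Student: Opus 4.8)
The plan is to reduce from the halting problem for two-counter Minsky machines, which is undecidable. Given such a machine $M$ over counters $c_1,c_2$, with instructions of the form $\mathtt{inc}\ c_i$, ``if $c_i{=}0$ goto $L$ else $c_i{:=}c_i{-}1$; goto $L'$'', and $\mathtt{halt}$, I would construct a phaser program $P(M)$ using exactly three phasers and four tasks such that $M$ halts if and only if some deadlock configuration is reachable from $\cConfInit$ in $P(M)$. Observe that, since control-state reachability is shown decidable for a bounded number of tasks and phasers, the reduction must genuinely exploit the phase \emph{values} occurring in a deadlock --- concretely, the cyclic ``each task's wait phase equals the signal phase it is waiting for'' pattern mentioned in the introduction --- and it is exactly this cycle that will carry the unbounded counter information.

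First I would fix the encoding: a controller task $t_0$ that walks the finite control of $M$; two auxiliary tasks $t_a,t_b$; and a task $t_\star$ used only to close a cyclic wait at $\mathtt{halt}$. The three phasers are $p_1,p_2$, holding the two counters, and $p_3$, used for the $\mathtt{halt}$ gadget and for the book-keeping needed by the zero test. The value of $c_i$ is encoded as a gap between a signal phase and a wait phase on $p_i$: an $\mathtt{inc}$ is a $\mathtt{signal}$ command, which is non-blocking and widens the gap by one; a decrement is a $\mathtt{wait}$ command, which narrows the gap by one and, by the $\mathtt{(wait)}$ rule, can fire only while the gap is still positive. The $\mathtt{halt}$ instruction is simulated by a small gadget in which $t_0$ and $t_\star$ become mutually blocked on $p_3$, each one's wait phase being exactly equal to the other's signal phase, so that reaching $\mathtt{halt}$ along a faithful run makes a deadlock reachable. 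For the converse direction one shows that the only reachable deadlocks of $P(M)$ are those produced by this gadget, and only along runs that faithfully simulate $M$ up to $\mathtt{halt}$; pre-processing $M$ so that it can reach $\mathtt{halt}$ only with both counters equal to zero is part of this.

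The hard part will be simulating the zero test faithfully. Phasers only offer ``block until every registrant has caught up'', so the natural decrement blocks exactly when the counter is zero; worse, by the $\mathtt{(wait)}$ rule a task clearing a $\mathtt{wait}$ needs its \emph{own} signal phase strictly above its own wait phase, so a task decrementing a zero counter is blocked by itself and already forms a singleton deadlock set. Hence the construction can neither branch on ``$c_i{=}0$'' by naively attempting a decrement, nor afford such a self-block whenever $M$ honestly takes the zero branch. The approach I would take is to resolve each zero test angelically --- nondeterministically guessing the branch --- and then verify the guess using $t_a$ (or $t_b$) together with $p_3$: a guess of ``$c_i{=}0$'' leaves the gap untouched, while a guess of ``$c_i{>}0$'' is checked by having an auxiliary task (kept with a permanent slack of one, so that it never blocks itself) perform the decrement, so that a wrong guess leaves that task harmlessly blocked \emph{by $t_0$} rather than forming a mutually-blocked cycle, while $t_0$ --- which is never made to wait for the auxiliary --- is steered into a \emph{runtime error} (for instance by operating on a phaser it has just dropped) or into a clean termination of all tasks. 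Together with the pre-processing of $M$ and a $\mathtt{halt}$ gadget whose deadlock cycle closes only when both counters are exactly zero, this is meant to keep cheating runs away from the target. Establishing rigorously that these repairs introduce no spurious deadlock --- that wrong guesses never create a cyclic or self-block wait off the intended path, and that a cheating run can reach the $\mathtt{halt}$ gadget with zero counters only if $M$ itself halts --- while staying within the three-phaser, four-task budget, is the technical core of the argument.
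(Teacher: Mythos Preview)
Your zero-test simulation has a genuine gap. You verify only the ``$c_i{>}0$'' branch (via a blocking decrement) and let the ``$c_i{=}0$'' branch be taken freely, hoping the final zero-counter requirement at $\mathtt{halt}$ catches cheaters. It does not. With unverified zero branches your simulated machine is no longer a Minsky machine but merely a two-counter VAS: each ``$\mathtt{if}\ c_i{=}0$'' becomes a nondeterministic choice between a pure control transition (the zero branch) and an ordinary decrement (the positive branch), and reachability of $(\mathtt{halt},0,0)$ in a VAS is decidable, so it cannot coincide with Minsky halting. Concretely, take $M$ with $L_0{:}\ \mathtt{inc}\ c_1;\ \mathtt{goto}\ L_1$ and $L_1{:}\ \mathtt{if}\ c_1{=}0\ \mathtt{goto}\ \mathtt{halt}\ \mathtt{else}\ \mathtt{goto}\ L_0$. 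After your drain-to-zero preprocessing, a run that increments once, wrongly takes the zero branch at $L_1$, and then honestly drains $c_1$ in the preprocessing loop reaches $\mathtt{halt}$ with both counters zero and closes your deadlock cycle---yet $M$ never halts. Your claim that ``a cheating run can reach the $\mathtt{halt}$ gadget with zero counters only if $M$ itself halts'' is therefore false for the construction you describe.

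The paper sidesteps zero tests entirely by reducing from \emph{reachability of $(s_F,0,0,0)$ in three-counter reset-VAS}, which is already undecidable. Increment and decrement are $\mathtt{signal}$/$\mathtt{wait}$ much as you suggest, with counter $x_b$ stored as the gap $\sigVal{\thid_{ab}}{\phid_b}-\waitVal{\thid_{bc}}{\phid_b}$ between two different tasks on the same phaser. A \emph{reset} of $x_b$ is simulated by having the task on the signal side $\mathtt{exit}$ (deregistering and erasing its phase) and be freshly re-spawned by a neighbour, which re-registers it with the neighbour's phases and so collapses the gap. The only ``all counters zero'' test is needed once, at the target control state, and is supplied precisely by the deadlock condition itself: the three worker tasks (orchestrated by a fourth, main, task) form a length-three wait cycle exactly when every such gap is zero. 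That is the idea you are missing: do not try to simulate a zero test---encode a model whose undecidability needs only increment, decrement and reset, all of which phasers can faithfully realise, and let the deadlock predicate provide the single terminal zero check.
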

\begin{figure}[ht]
{
\begin{lstlisting}[basicstyle=\footnotesize\ttfamily,numbers=none,mathescape=false,commentstyle=\color{blue}\ttfamily]
q in {s0, s1, ..., sF}
com in {done, check, t12_dec_x1,t12_inc_x1, t12_reset_x1,t23_dec_x2, t23_inc_x2, ...};

main(){
 v1 = newPhaser();
 v2 = newPhaser();
 v3 = newPhaser();

 q = s0;
 com = done;

 asynch(task12,v1,v2,v3);
 asynch(task23,v2,v3,v1);
 asynch(task31,v3,v1,v2);

 v1.drop();
 v2.drop();
 v3.drop();

 while(true){
  //check reachability
  if(( * ) && (q == sF)){
   com = check;
   exit;
  };

  // let f,g,h be strings automorphisms on strings
  // that coincide with the identity except for
  // that (f($a)=1, f($b)=2 and f($c)=3), (g($a)=2,
  // g($b)=3 and g($c)=1) and (h($a)=3, h($b)=1
  // and h($c)=2). Syntactically apply each
  // morphism to the remaining part of the while loop.
  
  //case (si,inc(x$a),sj) 
  if (( * ) && (q == si)){
   com = t($c$a)_inc_x($a);
   while(com != done){};
   q = sj;
  };  
  //case (si,dec(x$a),sj)
  if(( * ) && (q == si)){
   com = t$a$b_dec_x$a;
   while(com != done){};
   q = sj;
  };    
  //case (si,reset(x$a),sj)
  if(( * ) && (q == si)){
   com = t$a$b_reset_x$a;
   while(com != done){};
   q = sj;
  };
 };
}
\end{lstlisting}
}
\caption{The main task orchestrates the encoding of the reachability
  of any three counters reset machine as reachability of a deadlock
  configuration of a phaser program. }
\label{fig:deadlock1}
\end{figure}
\begin{figure}[h]
  {
\begin{lstlisting}[basicstyle=\footnotesize\ttfamily,numbers=none,mathescape=false,commentstyle=\color{blue}\ttfamily]
  // Syntactically apply each morphism in {f,g,h} to the whole body
  // text of the task. This gives three tasks task12(v1,v2,v3),
  // task23(v2,v3,v1) and task31(v3,v1,v2)

task$a$b(v$a,v$b,v$c){
 while(true){
  if(com = t$a$b_inc_x$b){
   v$b.sig();
   com = done;
  };
  if(com = t$a$b_dec_x$a){
   v$a.sig();
   com = t$b$c_dec_x$a;
   v$a.wait();
   while(com != t$a$b_dec_x$a){};
   com = done;
  };
  if(com = t$a$b_dec_x$b){
   v$b.wait();
   com = t$c$a_dec_x$b;
  };
  if(com = t$a$b_dec_x$c){
   v$c.sig();
   com = t$b$c_dec_x$c;   
   v$c.wait();
   while(com != t$a$b_dec_x$c){}
   com = t$c$a_dec_x$c;
  };
  if(com = t$a$b_reset_x$a){
   v$a.sig();
   com = t$b$c_reset_x$a;
   while(com != t$a$b_reset_x$a){};
   v$a.wait();
   asynch(taskc$a,v$c,v$a,v$b);
   com = done;
  };
  if(com = t$a$b_reset_x$b){
   com = t$b$c_reset_x$b;
   exit; 
  };
  if(com = t$a$b_reset_x$c){
   v$c.sig();
   com = t$b$c_reset_x$c;
   while(com != t$a$b_reset_x$c){};
   v$c.wait();
   com = t$c$a_reset_x$b;
  };
  if(com = check){
   v$a.sig();
   v$a.wait();
  };
 };
}
\end{lstlisting}
}
\caption{Three tasks: task12, task23 and task31 synchronize with the
  main task in the encoding of the reachability of any three counters
  reset machine as reachability of a deadlock configuration of a
  phaser program. }
\label{fig:deadlock2}
\end{figure}
\begin{proof}
  Sketch.  We encode the reachability problem of any given 3-counters
  reset-VAS (vector addition system with reset arcs)
  as the reachability problem of a configuration with a simple cycle involving
  three tasks.
  Indeed,  reachability of configuration
  $(s_F,0,0,0)$ (three counters $x,y,z$ with zero values at some control
  location $s_F$) is undecidable for reset-VASs.
  Figures ~(\ref{fig:deadlock1}-\ref{fig:deadlock2}) in the appendix describe a
  phaser program where a main task $\thid$ spawns three tasks
  $\set{\thid_{12},\thid_{23},\thid_{31}}$ s.t. $\thid_{\$a\$b}$ runs
  $\mathtt{task\$a\$b}$ for each
  $\mathtt{\$a\$b}\in\set{\mathtt{12},\mathtt{23},\mathtt{31}}$.
  The main task orchestrates the simulation and is registered to
  three phasers $\phid_1,\phid_2$ and $\phid_3$. Main and the other
  tasks use their respective local variable $\mathtt{v\$i}$, for
  $i\in\set{1,2,3}$, to point to phaser $\phid_{\$i}$.
  %
  %
  The idea is to encode the value of counter $x_{\$b}$ using the
  difference $\sigVal{\thid_{\$a\$b}}{\phid_{\$b}}-\waitVal{\thid_{\$b\$c}}{\phid_{\$b}}$, for
  $\$a\$b\$c \in \set{123,231,312}$.
  %
  %
  Resets of counter $x_{\$b}$ are encoded by asking
  task $\thid_{\$a\$b}$ to exit (hence deregistering from all phasers)
  and having task $\thid_{\$b\$c}$ spawn a new $\mathtt{task\$a\$b}$. 
  Finally asking each task $\thid_{\$a\$b}$ to perform a
  $\mathtt{wait}$ on phaser ${\phvar_{\$a}}$ ensures a simple cycle
  of size 3 is built exactly when the three counters are $0$.
\end{proof}

\section{Symbolic verification of phaser programs}
\label{sec:symbolic}

We briefly introduce gap-order constraints and use
them to define a symbolic representation (hereafter constraints) that we use in Section
\ref{sec:proc} for checking reachability.

\subsection{Gap-order constraints and
  graphs~\cite{TCS93:Revesz,LahiriMusuvathi:utvpi:2005,TCS14:BozzelliPinchinat,FI16:Mayr:Totzke}.} 
Gap-order constraints can be regarded as a particular case of the octagons or the
\emph{unit two variables per inequality} (utvpi) constraints. 
Assume in this section that $\xvar$ and $\yvar$
are integer variables and that $\kconst$ is an integer constant. We use
$\xSet$ and $\ySet$ to mean finite sets of integer
variables.
A valuation $\val$ is a total function $\xSet\to\ints$. 
Valuations are implicitly extended to preserve
constants (i.e. $\val(\kconst)=\kconst$ for any $\kconst\in\ints$).
A {\em gap-order clause} $\gclause$ over $\xSet$ is an
inequality of the form $a-b\geq k$ where $a,b\in\xSet\cup\set{0}$. A
{\em gap-order constraint} $\gcstr$ over $\xSet$ is a finite
conjunction of gap-order clauses over the same set
$\xSet$.
Observe that $\left(\xvar = \yvar+2 ~\wedge~ \yvar \leq 5\right)$ is essentially a
gap-order constraint because it can be equivalently rewritten as the
conjunction $\left(\xvar - \yvar \geq 2 ~\wedge~ \yvar -\xvar \geq -2 ~\wedge~ 0
- \yvar \geq -5\right)$.
Given a gap-order constraint $\gcstr$ over $\xSet$ and a
valuation $\val:\xSet\to\ints$,  we write
$\val\models\gcstr$ to mean that $\val(\ael)-\val(\bel)\geq k$
holds for each gap-order clause $\gclause: \ael - \bel \geq k$
appearing in $\gcstr$.
We let $\satOf{\gcstr}$ be the set $\setcomp{\val:\xSet\to\ints}{\val\models\gcstr}$.

A gap-order graph (or graph for short) $\ggraph$ over
$\xSet$ is a graph $\ggraphTuple$ with vertices $\vertices=\xSet\cup\set{0}$
where edges in $\edges$ are of the form $\gedge{\ael}{\kconst}{\bel}$
with $\ael,\bel\in\vertices$ and
weight $\kconst$ in $\ints\cup\set{-\infty,+\infty}$.
We let $\verticesOf{\ggraph}=\xSet$.
Given a gap-constraint $\gcstr$ over $\xSet$, 
we can build the graph $\ggraphOf{\gcstr}$ with vertices
$\xSet\cup\set{0}$ and where $\edges$ only contains a representative
$\gedge{\ael}{\kconst}{\bel}$ edge for each clause $\ael - \bel \geq
k$ appearing in $\gcstr$.
A valuation $\val:\xSet\to\ints$ satisfies a graph $\ggraph=\ggraphTuple$
(written $\val\models\ggraph$) iff
$\val(\ael)-\val(\bel)\geq \kconst$ for each $\gedge{\ael}{\kconst}{\bel}\in\edges$.
We let $\satOf{\ggraph}$ be the set $\setcomp{\val:\xSet\to\ints}{\val\models\ggraph}$.
Clearly, $\satOf{\ggraphOf{\gcstr}}=\satOf{\gcstr}$.
The closure $\closureOf{\ggraph}$ of a graph $\ggraph=\ggraphTuple$ is the unique
complete graph with the same vertices $\vertices$ and where
$\gedge{a}{k'}{b}$ is an edge of $\closureOf{\ggraph}$ iff
$k'\in\ints\cup\set{-\infty,+\infty}$ 
is the least upper bound of all weight-sums for any path in $\ggraph$
from $\ael$ to $\bel$.
Closure allows us to deduce $(0-x \geq -7)$ from $(y-x\geq -2 \wedge
0-y\geq -5)$.
The result of the closure procedure is a special graph $\emptyGraph$ denoting
the graph without any satisfying valuation
each time a weight k=$+\infty$ is generated.
The closure of a graph can be computed in
polynomial time and we get $\satOf{\closureOf{\ggraph}}=\satOf{\ggraph}$.
We define the {\em degree} of a graph $\ggraph$ (written $\degreeOf{\ggraph}$)
to be $0$ if no edge in $\closureOf{\ggraph}$ has a
negative weight apart from $-\infty$. Otherwise, $\degreeOf{\ggraph}$ is the largest
natural $k\in\nats$ such that there is an edge in
$\closureOf{\ggraph}$ with weight $-k$.
For instance, the degree of the graph resulting from $(x-y\geq 2 \wedge y-x\geq -4)$ is $4$.
We systematically close all manipulated graphs and write $\ggraphs(\xSet)$
for the set of closed graphs over $\xSet$.
Given a graph $\ggraph$,  we write $\subst{\ggraph}{\xvar}{\yvar}$ to
mean the graph obtained by replacing the vertex $\xvar$ by the vertex
$\yvar$. We abuse notation and write
$\substSet{\ggraph}{\setcomp{\xvar_i/\yvar_i}{i\in I}}$, for pairwise
different $\xvar_i$ elements to mean the simultaneous application of
the individual substitutions.
For a set of variables $\ySet$,  we write
$\ggraph\removeVars\ySet$ to mean the graph obtained by
removing the variables in $\ySet$ from the vertices
of $\ggraph$.
Given two closed graphs $\ggraph$ and $\oggraph$ over the same $\xSet$,
we write $\ggraph \entailed \oggraph$ to mean that each directed edge in
$\ggraph$ is labeled with a larger weight in $\oggraph$. 
As a result, $\satOf{\oggraph}\subseteq\satOf{\ggraph}$.
Finally, we write $\ggraph \intersect \oggraph$ to mean the closure
of the graph obtained with merging the two sets of vertices and edges. 
As a result, $\satOf{\ggraph\intersect\oggraph}=\satOf{\ggraph}\cap\satOf{\oggraph}$.

\subsection{Constraints as a symbolic representation.}
A constraint $\sConf$ is a tuple $\sConfTuple$ where the
only difference with the definition of a configuration $\cConfTuple$
is the adoption of a gap-order constraint 
$\sPhaseMp$ instead of $\cPhaseMp$.
More specifically, $\sPhaseMp:\phidSet\to
\cup_{\othidSet\subseteq\thidSet}\ggraphs(\cup_{\thid\in\othidSet}\smset{\waitVar{\thid}{\phid},\sigVar{\thid}{\phid}})$
is a total mapping that associates a gap-order graph
%
to each phaser $\phid\in\phidSet$.
  Intuitively, we use variables $\waitVar{\thid}{\phid}$ and
  $\sigVar{\thid}{\phid}$ to constrain in graph $\sPhaseMp(\phid)$
  possible values of both wait ($\waitVal{\thid}{\phid}$) and signal
  ($\sigVal{\thid}{\phid}$) phases of each task $\thid$ registered
  to phaser $\phid$.
%
%
  As a result, we can check if task $\thid$ is registered to
  phaser $\phid$ according to graph $\ggraph=\sPhaseMp(\phid)$ by checking if
  $\smset{\waitVar{\thid}{\phid},\sigVar{\thid}{\phid}}\subseteq\verticesOf{\ggraph}$.
  %
  We will write $\registred{\phid}{\ggraph}$ to mean the set of tasks
  $\smsetcomp{\thid}{\smset{\waitVar{\thid}{\phid},\sigVar{\thid}{\phid}}\subseteq\verticesOf{\ggraph}}$.
  We also write $\isRegistred{\thid}{\phid}{\ggraph}$ for
  the predicate $\thid\in\registred{\phid}{\ggraph}$.
  Observe that the language semantics impose that, for each phaser
  $\phid$ and for any pair $\thid,\othid$ of tasks in
  $\registred{\phid}{\ggraph}$, the predicate $0 \leq
  \waitVal{\thid}{\phid} \leq \sigVal{\othid}{\phid}$ is an invariant. 
  For this reason, we always safely strengthen, 
  in any obtained $\sPhaseMp(\phid)=\ggraph$,
  weights $\kconst$ in 
  $\gedge{\sigVar{\thid}{\phid}}{\kconst}{\waitVar{\othid}{\phid}}$,
  $\gedge{\sigVar{\thid}{\phid}}{\kconst}{0}$ and
  $\gedge{\waitVar{\thid}{\phid}}{\kconst}{0}$  
  with $max(\kconst,0)$.
  %
  The following definition helps us characterize
  configurations for which our procedure 
  terminates.

  \begin{definition}[degree and freeness of constraints]
  A constraint
  $\tuple{\thidSet,\phidSet,\boolMp,\pcMp,\varMp,\sPhaseMp}$ has as degree
  the largest degree among all its graphs $\sPhaseMp(\phid)$ for $\phid\in\phidSet$
  if $\phidSet$ is not empty and $0$ otherwise. 
  %
  Furthermore, a constraint is said to be {``free''} if, for
  any $\phid\in\phidSet$, the only edges in $\sPhaseMp(\phid)$ with
  weights different from $-\infty$ are edges of the forms (i)
  $\gedge{\sigVar{\thid}{\phid}}
  {\kconst_{(\sigVar{\thid}{\phid},\waitVar{\othid}{\phid})}}{\waitVar{\othid}{\phid}}$,
  (ii) 
  $\gedge{\sigVar{\thid}{\phid}}{\kconst_{(\sigVar{\thid}{\phid})}}{0}$,
  or (iii) 
  $\gedge{\waitVar{\thid}{\phid}}{\kconst_{(\waitVar{\thid}{\phid})}}{0}$ for some
  $\thid,\othid\in\registred{\phid}{\sPhaseMp(\phid)}$ and
  $\kconst_{(\sigVar{\thid}{\phid},\waitVar{\othid}{\phid})},
  \kconst_{(\sigVar{\thid}{\phid})},
  \kconst_{(\waitVar{\thid}{\phid})}
  \in\nats$
  \end{definition}

Free constraints are only allowed to impose, for the same phaser,
non-negative lower bounds on differences between signals and waits,
between signals and 0, and between waits and 0. Like
degree-0-constraints, free constraints are not allowed to put a
positive upper bound on how much a signal is larger than a wait.
Unlike degree-0-constraints, they are not allowed to put bounds on the
differences among signal values, or among wait values. For instance a
free constraint cannot impose
$\sigVar{\thid}{\phid}-\sigVar{\othid}{\phid} = 0$ while a
degree-0-constraint can. Intuitively, freeness does not oblige our
verification procedure to maintain exact differences when firing
"signal" or "wait" instructions, jeopardizing termination.
This will be stated in Section \ref{sec:proc}.


  \subsection{Denotations of constraints.}
  Given a configuration
  $\cConf=\tuple{\thidSet,\phidSet,\boolMp, \pcMp, \varMp, \cPhaseMp}$
  and a constraint
  $\sConf=\tuple{\thidSet',\phidSet',\boolMp',\pcMp',\varMp',\sPhaseMp'}$,
  we say that $\cConf$ satisfies $\sConf$, and write $\cConf \models
  \sConf$, if $\cConf$ satisfies (up to a renaming of the tasks and the phasers)
  conditions  imposed by $\sConf$.
  More concretely, $\cConf \models
  \sConf$ if $\boolMp=\boolMp'$ and there are bijections 
  $\taskMapping: \thidSet\to\thidSet'$
  and $\phaserMapping: \phidSet\to\phidSet'$ such that:
  (i)   $\pcMp(\thid)=\pcMp'(\taskMapping(\thid))$ for each
  $\thid\in\thidSet$; and (ii)
  $\phaserMapping(\varMp(\thid)(\phvar))=\varMp'(\taskMapping(\thid))(\phvar)$
  for  each $\thid\in\thidSet$ and $\phvar\in\phVarSet$;
  and (iii) the renaming of tasks and
  phasers in $\cPhaseMp$ wrt. $\taskMapping$ and $\phaserMapping$
  satisfies $\sPhaseMp$, i.e., (iii.a)
  for each $\thid\in\thidSet$ and each $\phid\in\phidSet$,
  $\cPhaseMp(\phid)(\thid)\defined$ iff
  $\isRegistred{\taskMapping(\thid)}{\phaserMapping(\phid)}{\sPhaseMp(\phaserMapping(\phid))}$, and (iii.b)
  for each $\phid'\in\phidSet'$,
  $\ggraph(\bigwedge_{\thid'\in\registred{\phid'}{\sPhaseMp(\phid')}}((\waitVar{\thid'}{\phid'},\sigVar{\thid'}{\phid'})=\cPhaseMp(\phaserMapping^{-1}(\phid'))(\taskMapping^{-1}(\thid'))))
  \models\sPhaseMp(\phid')$.
  %
  %
  %
  %
 %
  We let $\denotationOf{\sConf}$ denote  $\setcomp{\cConf}{\cConf\models\sConf}$.
%
  %
  Intuitively, $\denotationOf{\sConfTuple}$ contains all
  configurations $\cConf$ with the same number of tasks and phasers
  and such that there are renamings of tasks and phasers that preserve
  in $\cConf$ the correspondence between $\pcMp$, $\varMp$ and
  $\sPhaseMp$.
  We write $\denotationOf{\sConfSet}$, for a set $\sConfSet$ of constraints,
  to mean the union $\cup_{\sConf\in\sConfSet}\denotationOf{\sConf}$.
  Given a program $\programTuple$, 
  we can exactly characterize with
  a finite set of constraints all configurations involving $n$
  tasks and $p$ phasers and satisfying the premises of rules
  $\mathtt{(runtime~error)}$, $\mathtt{(assert.~fault)}$,
  $\mathtt{(race)}$ and $\mathtt{(deadlock)}$
  from Fig.(\ref{fig:semantics})  of the appendix.

  \begin{lemma}[Characterizing badness]
    Given a program $\programTuple$ and natural numbers $(n,p)$,
    we can exhibit finite sets of constraints
    $\sConfSetRace{n}{p}$, $\sConfSetAssert{n}{p}$,
    $\sConfSetRuntime{n}{p}$ and $\sConfSetDeadlock{n}{p}$
    such that:
    \begin{align*}
    \cConfSetRace{n}{p}&=\denotationOf{\sConfSetRace{n}{p}} \\
    \cConfSetAssert{n}{p}&=\denotationOf{\sConfSetAssert{n}{p}}\\
    \cConfSetRuntime{n}{p}&=\denotationOf{\sConfSetRuntime{n}{p}} \\
    \cConfSetDeadlock{n}{p}&=\denotationOf{\sConfSetDeadlock{n}{p}}
    \end{align*}
    In addition, we can choose the constraints in
    $\sConfSetDeadlock{n}{p}$ to be of degree $0$ while those in
    $\sConfSetRace{n}{p}$, $\sConfSetAssert{n}{p}$ or in
    $\sConfSetRuntime{n}{p}$ to be free. 
  \end{lemma}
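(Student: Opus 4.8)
The plan is to construct each of the four constraint sets explicitly by reading off the premises of the corresponding rule in Fig.~(\ref{fig:semantics}) and packaging the finitely many "shapes" of bad configurations into finitely many constraints. For a fixed pair $(n,p)$, the set of possible values of $\thidSet$, $\phidSet$, $\boolMp$, $\pcMp$ (a map from $n$ task identifiers into the finite set $\seqSet$), and $\varMp$ (a map from $n$ task identifiers to partial functions $\phVarSet\to\phidSet$, hence finitely many choices) is finite up to the renamings $\taskMapping,\phaserMapping$ built into $\models$; so the only component that could in principle range over an infinite set is $\sPhaseMp$. For the race, assertion-fault, and runtime-error rules, the premise imposes \emph{no} constraint at all on the phases — it only looks at $\headOf{\pcMp(\thid)}$, at $\boolMp$, and (for runtime error) at whether certain $\varMp(\thid)(\phvar)$ or $\cPhaseMp(\varMp(\thid)(\phvar))(\thid)$ are undefined, the latter being expressible as $\thid\notin\registred{\phid}{\ggraph}$. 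Hence for each admissible tuple of the finite data we simply take $\sPhaseMp(\phid)$ to be, for every $\phid$, the \emph{empty} gap-order graph on the appropriate vertex set $\cup_{\thid\in\othidSet}\smset{\waitVar{\thid}{\phid},\sigVar{\thid}{\phid}}$ (all weights $-\infty$), except that we keep the always-valid strengthening weights $0$ on the edges $\gedge{\sigVar{\thid}{\phid}}{0}{\waitVar{\othid}{\phid}}$, $\gedge{\sigVar{\thid}{\phid}}{0}{0}$, $\gedge{\waitVar{\thid}{\phid}}{0}{0}$ mandated in the text. These are exactly the free constraints; their denotation is the set of all configurations with the prescribed control data and \emph{arbitrary} consistent phase values, which is precisely $\cConfSetRace{n}{p}$ (resp.\ $\cConfSetAssert{n}{p}$, $\cConfSetRuntime{n}{p}$). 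Finiteness of each set follows because we take one constraint per admissible finite tuple, and there are finitely many such tuples.

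For the deadlock set, I would first invoke the earlier \textbf{Characterizing badness}-style observation that $\cConfSetDeadlock{n}{p}$ coincides with the set of deadlock configurations (Definition~\ref{...}) of size $n$ with $p$ phasers — equivalently, the configurations satisfying the premise of the $\mathtt{(deadlock)}$ rule, namely that there exist $\thid_0,\dots,\thid_m$ and $\phid_0,\dots,\phid_m$ with each $\thid_i$ at a $\wait{\phvar_i}$, the relevant $\varMp$ values forming a cycle, and $\waitVal{\thid_i}{\phid_{(i+1)\%m}}\ge\sigVal{\thid_{(i+1)\%m}}{\phid_{(i+1)\%m}}$ for all $i$. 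Here the phases \emph{are} constrained, but only by inequalities of the form $\waitVar{\thid}{\phid} - \sigVar{\othid}{\phid} \ge 0$, i.e.\ gap-order clauses of weight $0$; combined with the always-present strengthening clauses (also weight $0$) this is a degree-$0$ constraint, since its closure can never produce a negative weight other than $-\infty$ (the only finite weights are $0$, and $0$-weight cycles through the consistency invariant $0\le\waitVal{}{} \le\sigVal{}{}$ stay at $0$). So for each admissible choice of the finite control data together with a choice of the cyclic witness $(\thid_i),(\phid_i),(\phvar_i),(\ophvar_i)$ — finitely many, since $m\le n$ — I emit one constraint whose $\sPhaseMp(\phid)$ records exactly these weight-$0$ edges plus the strengthenings, all other weights $-\infty$. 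The union over all such choices is finite and its denotation is exactly $\cConfSetDeadlock{n}{p}$.

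The routine but slightly delicate part is verifying the two directions of $\cConfSetX{n}{p}=\denotationOf{\sConfSetX{n}{p}}$ against the precise definition of $\models$: one must check that the bijections $\taskMapping,\phaserMapping$ absorb exactly the "up to renaming of tasks and phasers" slack, that condition (iii.a) matches "registered iff $\cPhaseMp(\phid)(\thid)\defined$", and that (iii.b) — the graph $\ggraph(\bigwedge (\waitVar{}{}, \sigVar{}{}) = \cPhaseMp(\dots))\models\sPhaseMp(\phid')$ — reduces, for our chosen $\sPhaseMp$, to the arithmetic conditions in the rule's premise. The main obstacle I anticipate is purely bookkeeping: ensuring that when the deadlock witness reuses a phaser or a task at several cycle positions the emitted graph is still well-formed (a single graph per phaser, vertex set the union over all registered tasks) and that the strengthening convention does not accidentally add constraints that exclude a genuine bad configuration — it does not, precisely because $0\le\waitVal{}{}\le\sigVal{}{}$ is a semantic invariant, so every reachable bad configuration already satisfies those weight-$0$ edges. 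No well-quasi-ordering or termination argument is needed here; it is a direct construction, and the degree-$0$/freeness claims are immediate from inspecting which edge weights the constructions ever use.
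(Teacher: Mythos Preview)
Your proposal is correct and follows essentially the same approach as the paper: enumerate the finitely many control tuples $(\thidSet,\phidSet,\boolMp,\pcMp,\varMp)$ together with all registration patterns, attach to each phaser the weakest admissible graph (what the paper calls $\topOf{\othidSet,\phid}$, which is exactly your ``empty graph plus the invariant strengthenings''), and for the deadlock case additionally impose the weight-$0$ edges $\gedge{\waitVar{\thid_i}{\phid}}{0}{\sigVar{\thid_{(i+1)\%m}}{\phid}}$ coming from the cycle witness. Your discussion of why the deadlock graphs are degree $0$ but not free, and your remarks on the bookkeeping for $\models$ (bijections, condition (iii.a) encoding registration, the strengthening invariant not excluding any genuine bad configuration), match the paper's reasoning.
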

  \begin{proof}
    Observe that $n$ and $p$ are given naturals. Fix a task set
    $\thidSet=\set{\thid_1,\ldots,\thid_n}$ of size $n$ and a phaser set
      $\phidSet=\set{\phid_1,\ldots,\phid_p}$ of size $p$.
    We can therefore enumerate 
    all tuples $(\thidSet, \phidSet, \boolMp, \pcMp, \varMp)$
    where:
    \begin{itemize}
      \item $\boolMp:\bVarSet \to \set{\true,\false}$ is a total mapping
  that associates a value to each $\bvar\in\bVarSet$.
\item $\pcMp:\thidSet \to \seqSet$ is a total mapping that
  associates tasks to a sequence $\seqVal\in\seqSet$
\item $\varMp:\thidSet \to
  \pFunctionsOf{\phVarSet}{\phidSet}$ is a total mapping that
  associates, to each task identifier in $\thidSet$, a partial
  mapping from the local phaser variables $\phVarSet$ to phaser identifiers
  $\phidSet$. 
      \end{itemize}
    Let $\mathcal{C}$ be the set of such tuples.
    Observe that tuples in $\mathcal{C}$ are missing
    information about tasks' registration and wait and signal values.
    We complete this informaiton in the following.
    %
    Given a set $\othidSet\subseteq\thidSet$ of tasks and
    a phaser $\phid\in\phidSet$, we write
    $\topOf{\othidSet,\phid}$ to mean the graph of the conjunction
    $\bigwedge_{\thid,\othid\in\othidSet}(\sigVar{\thid}{\phid} \geq \waitVar{\othid}{\phid}\geq 0)$.
    Observe that an invariant of all phaser programs is that signal and
    wait phases (of all tasks registered on a given phaser) are always
    non-negative with the formers always larger or equal than the
    laters.
    For this reason,  $\topOf{\othidSet,\phid}$
    is the weakest possible graph
    where the set $\othidSet$ is registered to a phaser $\phid$.
    Observe that $\topOf{\othidSet,\phid}$ is free.
    We now finish the definitions of $\sConfSetRace{n}{p}$,
    $\sConfSetAssert{n}{p}$, $\sConfSetRuntime{n}{p}$,
    $\sConfSetDeadlock{n}{p}$:
    \begin{itemize}
    \item Add to $\sConfSetRace{n}{p}$ all constraints
      $\tuple{\thidSet,\phidSet,\boolMp,\pcMp,\varMp,\sPhaseMp}$
      where:
      \begin{itemize}
      \item $\tuple{\thidSet,\phidSet,\boolMp,\pcMp,\varMp}$ is a tuple of 
      $\mathcal{C}$ with two different tasks $\thid$ and $\othid$ in
      $\thidSet$ executing a read or a write on a boolean variable
      with at least one of them writting it (see 
      $\cConfSetRace{n}{p}$ in Fig. \ref{fig:semantics})  of the appendix.
      \item The total
      mapping $\sPhaseMp$ associates $\topOf{\othidSet_\phid,\phid}$
      to each phaser $\phid$, where $\othidSet$ is some subset of $\thidSet$.
      Intuitively, for each phaser $\phid$, we consider all
      registration possibilities (some subset
      $\othidSet\subseteq\thidSet$) while imposing the weakest
      possible constraints on the signal and wait phases of the
      registred tasks. Observe $\sPhaseMp$ is free.
      \end{itemize}
    \item Add to $\sConfSetAssert{n}{p}$ all constraints
      $\tuple{\thidSet,\phidSet,\boolMp,\pcMp,\varMp,\sPhaseMp}$
      where:
      \begin{itemize}
      \item $\tuple{\thidSet,\phidSet,\boolMp,\pcMp,\varMp}$ is a tuple of 
      $\mathcal{C}$ with some task $\thid$ in $\thidSet$ executing an
      assertion on a boolean condition that evaluates to false with
      $\boolMp$ (see $\cConfSetAssert{n}{p}$ in
      Fig. \ref{fig:semantics})  of the appendix.
      \item The total mapping $\sPhaseMp$ associates
      $\topOf{\othidSet_\phid,\phid}$ to each phaser $\phid$, where
        $\othidSet$ is some subset of $\thidSet$. Observe $\sPhaseMp$ is free.
        \end{itemize}
    \item Add to $\sConfSetRuntime{n}{p}$ all constraints
      $\tuple{\thidSet,\phidSet,\boolMp,\pcMp,\varMp,\sPhaseMp}$ where:
      \begin{itemize}
      \item $\tuple{\thidSet,\phidSet,\boolMp,\pcMp,\varMp}$ is a tuple of
      $\mathcal{C}$ with some task $\thid$ in $\thidSet$ is executing
        a statement that involves a phaser variable $\phvar$.
      \item 
      Again, the
      total mapping $\sPhaseMp$ associates
      $\topOf{\othidSet_\phid,\phid}$ to each phaser $\phid$, where
      $\othidSet$ is some subset of $\thidSet$.  In addition, we
      require that either $\varMp(\thid)(\phvar)\undefined$ or
      $\phid=\varMp(\thid)(\phvar)$ with
      $\thid\not\in\registred{\phid}{\sPhaseMp(\phid)}$.  (see
      $\cConfSetRuntime{n}{p}$ in Fig. \ref{fig:semantics})  of the appendix. Observe
      $\sPhaseMp(\ophid)$ is free for all $\ophid$ on which
      $\sPhaseMp$ is defined. Observe $\sPhaseMp$ is free.
      \end{itemize}
    \item Add to $\sConfSetDeadlock{n}{p}$ all constraints
      $\tuple{\thidSet,\phidSet,\boolMp,\pcMp,\varMp,\cPhaseMp}$ where:
      \begin{itemize}
      \item $\tuple{\thidSet,\phidSet,\boolMp,\pcMp,\varMp}$ is a tuple of
      $\mathcal{C}$ where a set of tasks $\thid_0, ..., \thid_{m-1}$
      in $\thidSet$ are executing wait commands on phaser variables
      $\phvar_0, ..., \phvar_{m-1}$ in $\phVarSet$.
      \item Again, the total
      mapping $\sPhaseMp$ associates $\topOf{\othidSet_\phid,\phid}$
      to each phaser $\phid$, where $\othidSet$ is some subset of
      $\thidSet$.  We however require that: $\phid_i=\varMp(\phvar_i)$
      for each $i:0\leq i < m$ and, $\thid_i$ is waiting for
      $\thid_{(i+1)\%m}$ in $\sPhaseMp{\phid_i}$, i.e.,
      $\sPhaseMp{\phid_i}$ imposes the wait phase of $\thid_i$ on
      $\phid_i$ is equal to the signal phase of $\thid_{(i+1)\%m}$ on
      $\phid_i$. In other words, the edge
      $\gedge{\sigVar{\thid_{(i+1)\%m}}{\phid_i}}{0}{\waitVar{\thid_i}{\phid_i}}$ in
      $\sPhaseMp{\phid_i}$.  (see 
      $\cConfSetDeadlock{n}{p}$ in Fig. \ref{fig:semantics})  of the appendix.
      Observe the graphs in $\sPhaseMp$ are not all free since some of them put an
      upper bound on how large some signal value are compared to some wait values.
      They are however of degree 0 since the only negative weights are $-\infty$.
      \end{itemize}
    \end{itemize}
    By construction, we have considered all possible tuples
    $\tuple{\thidSet,\phidSet,\boolMp,\pcMp,\varMp}$ and registration combinations.
    We have all considerd the weakest possible
    constraints on the phases for the registred tasks.
    In addition, any configuration in the denotation constraints will
    belong to the corresponding bad set.
  \end{proof}

  \subsection{Entailment.}
  We say that a constraint 
  $\sConf=\tuple{\thidSet,\phidSet,\boolMp,\pcMp,\varMp,\sPhaseMp}$ is weaker than
  a constraint 
  $\sConf'=\tuple{\thidSet',\phidSet',\boolMp',\pcMp',\varMp',\sPhaseMp'}$, written
  $\sConf \entailedBy \sConf'$,
  to mean the following. First, the two constraints
  have the same number of phasers and tasks, agree on the values of the boolean
  variables and, up to renamings, on the values 
  of the phaser variables and on which tasks are registered to which phasers.
  Second, the constraints on the wait and signal values are stronger
  in $\sConf'$ than in $\sConf$.
  %
  %
  More formally, $\sConf \entailedBy \sConf'$ if $\boolMp=\boolMp'$ and 
  there are  bijections  $\taskMapping:\thidSet\to\thidSet'$ and
  $\phaserMapping: \phidSet\to\phidSet'$ s.t. for each
  $\thid\in\thidSet$ and $\phid\in\phidSet$ the following four conditions hold:
  (i)   $\pcMp(\thid)=\pcMp'(\taskMapping(\thid))$; and (ii)
  $\phaserMapping(\varMp(\thid)(\phvar))=\varMp'(\taskMapping(\thid))(\phvar)$;
  and (iii) $\phaserMapping(\registred{\phid}{\sPhaseMp(\phid)})=\registred{\phaserMapping(\phid)}{\sPhaseMp'(\phaserMapping(\phid))}$; and (iv) 
  $\sPhaseMp(\phid)\entailed\substSet{\sPhaseMp'(\phaserMapping(\phid))}{\setcomp{\waitVar{\taskMapping(\thid)}{\phaserMapping(\phid)}/\waitVar{\thid}{\phid},\sigVar{\taskMapping(\thid)}{\phaserMapping(\phid)}/\sigVar{\thid}{\phid}}{\thid\in\registred{\phid}{\sPhaseMp(\phid)}}}$.
  Clearly, $\sConf \entailedBy \sConf'$ implies 
  $\denotationOf{\sConf'}\subseteq\denotationOf{\sConf}$. We say
  that $\entailedBy$ is sound.

    We can show that $\entailedBy$ is a well-quasi-order\footnote{A
      reflexive and transitive binary relation $\preceq$ is a
      well-quasi-order over a set $A$ if there is no infinite sequence
      $a_0, a_1, \ldots$ of $A$ elements s.t. $a_i\not\preceq a_j$ for
      all $i < j$.} over constraints of bounded degrees and involving
    fixed numbers of tasks and phasers since 
    $\entailed$ is itself
    a well-quasi-ordering over graphs of bounded degrees over a finite
    set of variables (\cite{TCS93:Revesz,FI16:Mayr:Totzke}).

  \begin{lemma}[WQO]
    Given $k,n,p \in\nats$, the entailment relation $\entailedBy$ over
    the set of constraints of degree $k$ involving at most $n$ tasks and
    $p$ phasers is a well-quasi-order.
  \end{lemma}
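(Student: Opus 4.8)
The plan is to bootstrap from the fact — recalled just above — that for any finite variable set $V$ the edge-wise order $\entailed$ is a well-quasi-order on the set $\ggraphs_k(V)$ of closed gap-order graphs over $V$ of degree at most $k$ \cite{TCS93:Revesz,FI16:Mayr:Totzke}, using three standard closure properties of well-quasi-orders: a finite disjoint sum of well-quasi-orders is a well-quasi-order, a finite product ordered componentwise is a well-quasi-order, and any quasi-order that contains a well-quasi-order on the same carrier is itself a well-quasi-order. Throughout I work with constraints of degree \emph{at most} $k$; this is harmless, since the constraints of degree exactly $k$ form a sub-quasi-order, and a sub-quasi-order of a well-quasi-order is again a well-quasi-order.

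First I would remove the ``at most'' from the task and phaser counts: since $\entailedBy$ relates two constraints only when they carry the same number of tasks and of phasers, the ordered set at hand is the disjoint sum, over the finitely many pairs $(n',p')$ with $n'\leq n$ and $p'\leq p$, of the sets of constraints having exactly $n'$ tasks, $p'$ phasers and degree at most $k$; so it suffices to fix one such pair, which I rename $(n,p)$. Next I would normalize identifiers: fix canonical sets $\thidSet_0=\set{\thid_1,\ldots,\thid_n}$ and $\phidSet_0=\set{\phid_1,\ldots,\phid_p}$. Every constraint with $n$ tasks and $p$ phasers is isomorphic, through a pair of renaming bijections on tasks and on phasers, to one whose task set is $\thidSet_0$ and whose phaser set is $\phidSet_0$; and since the definition of $\entailedBy$ quantifies over exactly such bijections it is invariant under renaming in either argument. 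Hence a good pair found among canonical representatives is a good pair for the original constraints, and it is enough to prove that $\entailedBy$ is a well-quasi-order on the set $\mathcal{S}_0$ of constraints over $\thidSet_0,\phidSet_0$ of degree at most $k$.

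On $\mathcal{S}_0$ I would introduce the auxiliary relation $\sqsubseteq_0$ defined by: $\sConf\sqsubseteq_0\sConf'$ iff the two constraints have the \emph{same skeleton} — identical $\boolMp$, identical $\pcMp$, identical $\varMp$, and the same set of tasks registered to each phaser — and $\sPhaseMp(\phid)\entailed\sPhaseMp'(\phid)$ for every $\phid\in\phidSet_0$. Taking the two witnessing bijections in the definition of $\entailedBy$ to be the identities shows $\sqsubseteq_0$ is contained in $\entailedBy$ on $\mathcal{S}_0$. Moreover $\sqsubseteq_0$ is a well-quasi-order: the skeleton ranges over a finite set, because $\bVarSet$, $\phVarSet$ and $\seqSet$ are finite and $\thidSet_0,\phidSet_0$ are fixed finite sets. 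Thus $(\mathcal{S}_0,\sqsubseteq_0)$ is the finite disjoint sum, over skeletons $s$, of $\prod_{\phid\in\phidSet_0}\ggraphs_k(V^{s}_{\phid})$ ordered componentwise by $\entailed$, where $V^{s}_{\phid}$ collects the phase variables $\waitVar{\thid}{\phid},\sigVar{\thid}{\phid}$ of the tasks $\thid$ that $s$ declares registered to $\phid$ and is a finite set determined by $s$. Each factor $(\ggraphs_k(V^{s}_{\phid}),\entailed)$ is a well-quasi-order by \cite{TCS93:Revesz,FI16:Mayr:Totzke}, so by the finite-product and finite-disjoint-sum closure properties $\sqsubseteq_0$ is a well-quasi-order.

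To conclude, $\entailedBy$ restricted to $\mathcal{S}_0$ is a quasi-order — reflexivity uses the identity bijections, and transitivity follows by composing the witnessing bijections and using that $\entailed$ is transitive and stable under a consistent relabelling of the graph vertices on both sides — and it contains the well-quasi-order $\sqsubseteq_0$; hence every infinite sequence in $\mathcal{S}_0$ has indices $i<j$ with $\sConf_i\sqsubseteq_0\sConf_j$, so $\sConf_i\entailedBy\sConf_j$, and $\entailedBy$ is a well-quasi-order on $\mathcal{S}_0$. With the two reductions this proves the lemma. I expect the main obstacle to be the careful interplay between the renaming bijections and the per-phaser graph comparisons — i.e. verifying that the ``identity-bijection'' relation $\sqsubseteq_0$ really sits inside $\entailedBy$ and that $\entailedBy$ is transitive — both of which reduce to the elementary observation that $\entailed$ on graphs is preserved when the vertices $\waitVar{\thid}{\phid}$ and $\sigVar{\thid}{\phid}$ are relabelled simultaneously on both sides according to bijections of tasks and phasers; everything else is a routine invocation of the closure properties of well-quasi-orders.
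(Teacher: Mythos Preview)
Your proof is correct and follows essentially the same approach as the paper's: normalize to fixed task and phaser identifiers, stratify by the finitely many ``skeletons'' $(\boolMp,\pcMp,\varMp,\text{registration})$, and then invoke the well-quasi-ordering of $\entailed$ on bounded-degree gap-order graphs for each phaser. The only difference is presentational---you package the argument via closure properties of well-quasi-orders (finite sums, finite products, extension to a containing quasi-order), whereas the paper unfolds the same reasoning as an explicit subsequence extraction, repeatedly applying Higman's lemma to the weight vectors phaser by phaser.
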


      \begin{proof}
      Assume an infinite sequence $\sConf_1,\sConf_2,\ldots$ of
      constraints where
      $\sConf_i=\tuple{\thidSet_i,\phidSet_i,\boolMp_i,\pcMp_i,\varMp_i,\sPhaseMp_i}$
      where $|\thidSet_i|=n$ and $|\phidSet_i|=p$ for all $i \geq 1$
      and where all appearing graphs have degree $k$ or less.
      We can assume wlog that $\thidSet_i=\thidSet_j$ and $\phidSet_i=\phidSet_j$
      for any $i,j\geq 1$.
      We show there are $i < j$ such that $\sConf_i \entailedBy \sConf_j$.
      There is a finite number of different values for $\boolMp_i, \pcMp_i, \varMp_i$
      and for the domains of $\sPhaseMp_i(\phid)$ for each $\phid\in\phidSet$.
      We can therefore extract an infinite subsequence where:
      \begin{itemize}
      \item $\boolMp_{i}=\boolMp_{j}$,
        $\pcMp_{i}=\pcMp_{j}$, $\varMp_{i}=\varMp_{j}$, and       
      \item for each $\phid\in\phidSet$, the domains of $\sPhaseMp_{i}(\phid)$
        and $\sPhaseMp_{j}(\phid)$ coincide.
      \end{itemize}
      Observe that for each phaser $\phid\in\phidSet$, the graphs $\sPhaseMp_{i}(\phid)$
      are the same up to possible differences on the weights.
      Let $\preceq$ be the component-wise ordering on vectors.
      Each graph $\sPhaseMp_{i}(\phid)$ is of degree $k$ or less.
      We can therefore organize its 
      weights as a vector $\vectorOf{\sPhaseMp_{i}(\phid)}$ with elements in
      $\ints\cup\set{-\infty,+\infty}$ but where the only allowed negative
      elements are those larger than $-k$.
      The vecotrs can be organized in a way that
      $\vectorOf{\sPhaseMp_{i}(\phid)} \preceq \vectorOf{\sPhaseMp_{j}(\phid)}$ means
      $\sPhaseMp_{i}(\phid)\entailed\sPhaseMp_{j}(\phid)$.
      We then repeat the
      following steps for each $\phid\in\phidSet$:
      using Higman's lemma \cite{higman1952ordering} and the degree boundedness of
        the constraints,
        we extract an infinite sequence of constraints where
        $\vectorOf{\sPhaseMp_{a_b}(\phid)}\preceq\vectorOf{\sPhaseMp_{a_c}(\phid)}$ if
        $a_b < a_c$.        
    \end{proof}

      \begin{figure*}[t]
  \begin{center}
    \scriptsize
    $$
    \begin{array}{cc}
      \multicolumn{2}{c}{
       \infer[\mathtt{(newPhaser~I)}]
             {\tuple{\thidSet,\phidSet,\boolMp,\pcMp,\varMp,\sPhaseMp} \preReducesTo{\thid}{}\tuple{\thidSet,\phidSet\setminus\set{\phid},\boolMp,\pcMp',\varMp',\sPhaseMp'}}
             { \begin{array}{c}
                 \pcMp'=\mapSubst{\pcMp}{\thid}{\phvar:=\newp;\pcMp(\thid)}
                 ~\wedge \\
                 \varMp(\thid)(\phvar)=\phid
                 ~\wedge~ \varMp'=\mapSubst{\varMp}{\thid}{\mapSubst{\varMp(\thid)}{\phvar}{\undefined}}
                 ~\wedge \\
                 \set{\waitVar{\thid}{\phid}\mapsto
                   0,\sigVar{\thid}{\phid}\mapsto 0} \models \sPhaseMp(\phid)
                 ~\wedge~ \sPhaseMp'= \sPhaseMp\setminus\set{\phid} ~\wedge \\
                 (\isRegistred{\othid}{\phid}{\sPhaseMp(\phid)} \implies \othid=\thid)      \\
               \end{array}
             }
      }
            \\
            \\
                         \multicolumn{2}{c}{
             \infer[\mathtt{(newPhaser~II)}]
                   {\tuple{\thidSet,\phidSet,\boolMp,\pcMp,\varMp,\sPhaseMp} \preReducesTo{\thid}{}\tuple{\thidSet,\phidSet\setminus\set{\phid},\boolMp,\pcMp',\varMp',\sPhaseMp'}}
                   { \begin{array}{c}
                       \pcMp'=\mapSubst{\pcMp}{\thid}{\phvar:=\newp;\pcMp(\thid)}
                       ~\wedge \\
                       \varMp(\thid)(\phvar)=\phid
                       ~\wedge~ \varMp'=\mapSubst{\varMp}{\thid}{\mapSubst{\varMp(\thid)}{\phvar}{\ophid}}
                       ~\wedge \\
                       \set{\waitVar{\thid}{\phid}\mapsto
                         0,\sigVar{\thid}{\phid}\mapsto 0} \models \sPhaseMp(\phid)
                       ~\wedge~ \sPhaseMp'= \sPhaseMp\setminus\set{\phid} ~\wedge \\
                       (\isRegistred{\othid}{\phid}{\sPhaseMp(\phid)} \implies \othid=\thid)      \\
                     \end{array}
                   }
                   }
            \\
            \\
            \multicolumn{2}{c}{
        \infer[\mathtt{(signal)}]
              {\sConfTuple \preReducesTo{\thid}{} \tuple{\thidSet,\phidSet,\boolMp,\pcMp',\varMp,\sPhaseMp'}}
              {
                \begin{array}{c}
                  \pcMp'=\mapSubst{\pcMp}{\thid}{\sig{\phvar};\pcMp(\thid)}
                  ~\wedge~ \varMp(\thid)(\phvar)=\phid
                  ~\wedge~ \isRegistred{\thid}{\phid}{\sPhaseMp(\phid)}
                  ~\wedge~ \\ \ggraph
                  = \left( \sPhaseMp(\phid) \intersect \ggraphOf{\wedge_{\othid\in\registred{\phid}{\sPhaseMp(\phid)}}
                    ({\sigVar{\thid}{\phid}> \waitVar{\othid}{\phid} \geq
                      0})} \right) \\ ~\wedge~ \isSat{\ggraph}
                  ~\wedge ~ \sPhaseMp'=\mapSubst{\sPhaseMp}{\phid}
                  {\left(\left(\subst{\ggraph}{\sigVar{\thid}{\phid}}
                    {\sigVar{}{}} \intersect\ggraphOf{\sigVar{\thid}{\phid}=\sigVar{}{}-1}\right)\removeVars\set{\sigVar{}{}}\right)}
                \end{array}
              }
      }
      \\ \\
      \multicolumn{2}{c} {
        \infer[\mathtt{(wait)}]
              {\sConfTuple \preReducesTo{\thid}{} \tuple{\thidSet,\phidSet,\boolMp,\pcMp',\varMp,\sPhaseMp'}}
              {
                \begin{array}{c}
                  \pcMp'=\mapSubst{\pcMp}{\thid}{\wait{\phvar};\pcMp(\thid)}
                  ~\wedge~ \varMp(\thid)(\phvar)=\phid
                  ~\wedge~ \isRegistred{\thid}{\phid}{\sPhaseMp(\phid)}
                  ~\wedge~ \\
                  \ggraph = \left(\sPhaseMp(\phid) \intersect \ggraphOf{\wedge_{\set{\othid\in\registred{\phid}{\sPhaseMp(\phid)}}}(\sigVar{\othid}{\phid}\geq\waitVar{\thid}{\phid} > 0)}\right) \\
                  ~\wedge~ \isSat{\ggraph}
                  ~\wedge~
                  \sPhaseMp'=\mapSubst{\sPhaseMp}{\phid}
                                      {\left(\left(\subst{\ggraph}{\waitVar{\thid}{\phid}}{\waitVar{}{}}
                                        \intersect\ggraphOf{\waitVar{\thid}{\phid}=\waitVar{}{}-1}\right)\removeVars\set{\waitVar{}{}}\right)}
                \end{array}
              }
      }
      \\
      \\
      \multicolumn{2}{c}{ \infer[\mathtt{(drop)}]
      {\sConfTuple \preReducesTo{\thid}{} \tuple{\thidSet,\phidSet,\boolMp,\pcMp',\varMp,\sPhaseMp'}
      }
      {\begin{array}{c} \pcMp'=\mapSubst{\pcMp}{\thid}{\dereg{\phvar};\pcMp(\thid)}
      ~\wedge~ \varMp(\thid)(\phvar)=\phid
      ~\wedge~ \neg\isRegistred{\thid}{\phid}{\sPhaseMp(\phid)}
      ~\wedge \\ \sPhaseMp'=\mapSubst{\sPhaseMp}{\phid}
          {\left(\sPhaseMp(\phid)\intersect
            \ggraphOf{
              (\sigVar{\thid}{\phid}\geq \waitVar{\thid}{\phid} \geq 0)
              \wedge_{\othid\in\registred{\phid}{\sPhaseMp(\phid)}}
              ({\sigVar{\othid}{\phid}\geq \waitVar{\thid}{\phid} \geq 0})
              \wedge
              ({\sigVar{\thid}{\phid}\geq \waitVar{\othid}{\phid} \geq 0})}
            \right)} \end{array}}
      }
      \\
      \\
             \multicolumn{2}{c}{
         \infer[\mathtt{(asynch)}]
               {\sConfTuple \preReducesTo{\thid}{}
                 \tuple{\thidSet\setminus\set{\othid},\phidSet,\boolMp,\pcMp'\setminus\set{\othid},\varMp',\sPhaseMp_n}}
               {
                 \begin{array}{c}
                   \pcMp'=\mapSubst{\pcMp}{\thid}{\asynch{\task}{\phvar_1,\ldots \phvar_k}\{\seqVal_1\};\pcMp(\thid)} ~\wedge~ 
                   \parametersOf{\task}=(\ophvar_1,\ldots \ophvar_k) ~\wedge~ 
                   \othid\in\thidSet\setminus\set{\thid} ~\wedge~ \\ 
                   \varMp'=\varMp\setminus\set{\othid} ~\wedge~ 
                   \pcMp(\othid)=\seqVal_1 ~\wedge~ 
                   \forall i:1\leq i \leq k.~ \varMp(\thid)(\phvar_i)=\varMp(\othid)(\ophvar_i)=\phid_i ~\wedge~ \\
                   (\isRegistred{\thid}{\phid_i}{\sPhaseMp(\phid_i)} 
                   \Leftrightarrow \isRegistred{\othid}{\phid_i}{\sPhaseMp(\phid_i)})
                   ~\wedge~ \\ \ggraph_i
                   = \left(\sPhaseMp(\phid_i)\intersect\ggraphOf{\waitVar{\thid}{\phid_i}=\waitVar{\othid}{\phid_i}\wedge\sigVar{\thid}{\phid_i}=\sigVar{\othid}{\phid_i}}\right)
                   ~\wedge~ \\
                   \isSat{\ggraph_i} ~\wedge~ \sPhaseMp_0=\sPhaseMp 
                   ~\wedge~ \sPhaseMp_i=\mapSubst{\sPhaseMp_{i-1}}{\phid_i}{\left(\ggraph_i\removeVars\set{\waitVar{\othid}{\phid_i},\sigVar{\othid}{\phid_i}}\right)} 
                 \end{array}
               }
       }
      \\
      \\
       \multicolumn{2}{c}
                   {
                     \infer[\mathtt{(exit)}]
                           {\tuple{\thidSet,\phidSet,\boolMp,\pcMp,\varMp,\sPhaseMp} \preReducesTo{\thid}{}
                             \tuple{\thidSet\cup\set{\thid},\phidSet,\boolMp,\pcMp',\varMp',\sPhaseMp'}}
                           {
                             \begin{array}{c}
                               \thid\not\in\thidSet ~\wedge~ \pcMp'=\mapSubst{\pcMp}{\thid}{\exit} ~\wedge~
                               f \in\pFunctionsOf{\phVarSet}{\phidSet} ~\wedge 
                               \varMp'=\varMp\uplus\set{\thid\mapsto f} ~\wedge~ 
                               \ophidSet \subseteq \phidSet ~\wedge~ \\
                               \sPhaseMp'=\substSet{\sPhaseMp}{\setcomp{\phid \leftarrow \cPhaseMp(\phid)\intersect
            \ggraphOf{
              (\sigVar{\thid}{\phid}\geq \waitVar{\thid}{\phid} \geq 0)
              \wedge_{\othid\in\registred{\phid}{\sPhaseMp(\phid)}}
              ({\sigVar{\othid}{\phid}\geq \waitVar{\thid}{\phid} \geq 0})
              \wedge
              ({\sigVar{\thid}{\phid}\geq \waitVar{\othid}{\phid} \geq 0})}
         }{~\phid\in\ophidSet}}
                             \end{array}
                           }
                   }            
    \end{array}
    $$
  \caption{Derivation rules for computing $\ipre{\thid}{\sConf}{}$ as
    union of all 
    $\smsetcomp{\sConf'}{{\sConf}\preReducesTo{\thid}{\sConf'} \textrm{ with } \sConf=\sConfTuple \textrm{ and } \thid\in\thidSet}$.}
    \label{fig:pre:symbolic}
    \end{center}
\end{figure*}

  \section{Verification Procedure}
  \label{sec:proc}
  

  \begin{procedure}
    \caption{check($\program$,$\sConfSetBad$,$\thNum$, $\phNum$), a simple working list procedure for checking constraints reachability.}
    \label{proc:check}
    \SetKwFunction{KwContinue}{continue} \KwIn{A program
      $\program=\programTuple$, a set $\sConfSetBad$ of pairwise
      $\entailedBy$-incomparable constraints, maximum upper bounds
      $\thNum$ and $\phNum$ (in $\nats\cup\set{+\infty}$) on coexisting
      tasks and phasers.}  \KwOut{A symbolic run to $\sConfSetBad$ or
      the value \texttt{unreachable}} Initialize both
    $\sConfSetWorking$ and $\sConfSetVisited$ to
    $\setcomp{\tuple{\sConf,\sConf}}{\sConf\in\sConfSetBad}$\;
    \While{there exists $\tuple{\sConf,\trace}\in\sConfSetWorking$}{
      remove $\tuple{\sConf,\trace}$ from $\sConfSetWorking$\; let
      $\sConfTuple=\sConf$\; \lIf{$|\thidSet|>\thNum$ or
        $|\phidSet|>\phNum$}{\KwContinue}
      \lIf{$\cConfInit\models\sConf$}{\Return $\trace$}
      \ForEach{$\thid\in\thidSet$}{
        \ForEach{$\sConf'\in\pre{\thid}{\sConf}$}{
          \If{$\osConf\not\entailedBy\sConf'$ for all $\tuple{\osConf,\_}\in\sConfSetVisited$}{
            Remove from $\sConfSetWorking$ and $\sConfSetVisited$ each $\tuple{\osConf,\_}$
            for which $\sConf'\entailedBy\osConf$\;
            Add $\tuple{\sConf', \sConf'\cdot\thid\cdot\trace}$ to both $\sConfSetWorking$ and $\sConfSetVisited$\;
          }
        }
       }
    }
    \Return {unreachable} \;
  \end{procedure}

We discuss in the following the procedure $\ref{proc:check}$ depicted below and
assume a program $\program$ and a set $\sConfSetBad$ of
constraints the reachability of which we want to check.
$\sConfSetBad$ can for example be any subset of
$\sConfSetDeadlock{n}{p}$ (degree 0) or of $\sConfSetAssert{n}{p}$
(free) in case we want to check the possibility
of a deadlock or of an assertion violation.

It is not difficult to show that
$\denotationOf{\ipre{\thid}{\sConf}{}}$ (obtained as described in
Fig.(\ref{fig:pre:symbolic}) of the appendix) coincides with
$\smsetcomp{\cConf'}{\cConf'\reducesTo{\thid}{}\cConf \textrm{ and }
  \cConf\in\denotationOf{\sConf}}$. Using the soundness of
$\entailedBy$, we can show by induction the partial correctness of the
procedure
\textup{check(}$\program$\textup{,}$\sConfSetBad$\textup{,}$+\infty$\textup{,}$+\infty$\textup{)}.

\begin{lemma}[Partial correctness]
\label{lem:correctness}
If \textup{check(}$\program$\textup{,}$\sConfSetBad$\textup{,}$+\infty$\textup{,}$+\infty$\textup{)}
returns \texttt{unreachable}, then
$\cConfInit\not\reducesTo{*}{}\denotationOf{\sConfSetBad}$.
If it returns a trace
$\sConf_n\cdot\thid_n\cdots\thid_1\cdot\sConf_1$ 
then there are $\cConf_n,\ldots\cConf_1$ with
$\cConf_n=\cConfInit$, $\cConf_1\in\denotationOf{\sConfSetBad}$
and 
$\cConf_i\reducesTo{\thid_i}{}\cConf_{i-1}$ for $i:1<i\leq n$.
\end{lemma}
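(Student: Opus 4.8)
The plan is to read off both halves of the statement from the dynamics of the working‑list loop, using only two ingredients established earlier: \textbf{(a)} exactness of the symbolic predecessor, i.e. $\denotationOf{\ipre{\thid}{\sConf}{}}=\smsetcomp{\cConf'}{\cConf'\reducesTo{\thid}{}\cConf\textrm{ and }\cConf\in\denotationOf{\sConf}}$, recalled just before the lemma, and \textbf{(b)} soundness of $\entailedBy$, i.e. $\sConf\entailedBy\sConf'$ implies $\denotationOf{\sConf'}\subseteq\denotationOf{\sConf}$. Since the lemma fixes $\thNum=\phNum=+\infty$, the guard ``$|\thidSet|>\thNum$ or $|\phidSet|>\phNum$'' is never taken, so every pair removed from $\sConfSetWorking$ by the loop body is processed in full. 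Observe that $\cConfInit\reducesTo{*}{}\denotationOf{\sConfSetBad}$ holds iff $\cConfInit$ lies in the least set $D$ of configurations that contains $\denotationOf{\sConfSetBad}$ and is predecessor‑closed (if $\cConf\in D$ and $\cConf'\reducesTo{\thid}{}\cConf$ then $\cConf'\in D$). First I would isolate two loop invariants. \textbf{(M1)}: the set $\bigcup_{\tuple{\sConf,\_}\in\sConfSetVisited}\denotationOf{\sConf}$ never shrinks along a run, because the only step deleting a pair $\tuple{\osConf,\_}$ from $\sConfSetVisited$ is performed jointly with inserting a pair $\tuple{\sConf',\_}$ such that $\sConf'\entailedBy\osConf$, and then $\denotationOf{\osConf}\subseteq\denotationOf{\sConf'}$ by (b). \textbf{(M2)}: any pair that is in $\sConfSetVisited$ when the loop terminates with $\sConfSetWorking$ empty was popped and processed by the loop body; indeed, considering the last moment that pair was added to $\sConfSetVisited$ (hence also to $\sConfSetWorking$), it is never again removed from $\sConfSetVisited$, so it cannot have left $\sConfSetWorking$ through the subsumption‑removal step (which would delete it from $\sConfSetVisited$ as well), and since $\sConfSetWorking$ is finally empty it must have been removed and processed at the top of some iteration.

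For the \texttt{unreachable} outcome, let $\sConfSetVisited^{\star}$ be the final value of $\sConfSetVisited$ and put $D^{\star}=\bigcup_{\tuple{\sConf,\_}\in\sConfSetVisited^{\star}}\denotationOf{\sConf}$. I would show that $\denotationOf{\sConfSetBad}\subseteq D^{\star}$, that $D^{\star}$ is predecessor‑closed, and that $\cConfInit\notin D^{\star}$; then $\cConfInit$ lies outside the least predecessor‑closed superset of $\denotationOf{\sConfSetBad}$, i.e. $\cConfInit\not\reducesTo{*}{}\denotationOf{\sConfSetBad}$. The first inclusion is immediate from the initialization of $\sConfSetVisited$ and (M1). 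For predecessor‑closedness, take $\cConf\in\denotationOf{\sConf}$ with $\tuple{\sConf,\_}\in\sConfSetVisited^{\star}$ and any $\cConf'\reducesTo{\thid}{}\cConf$; by (M2) this pair was processed, so $\ipre{\thid}{\sConf}{}$ was computed, and by (a) there is $\sConf'\in\ipre{\thid}{\sConf}{}$ with $\cConf'\in\denotationOf{\sConf'}$; when $\sConf'$ was examined it was either added to $\sConfSetVisited$, or discarded because some $\osConf$ then in $\sConfSetVisited$ satisfied $\osConf\entailedBy\sConf'$, in which case $\cConf'\in\denotationOf{\sConf'}\subseteq\denotationOf{\osConf}$ by (b); either way a constraint whose denotation contains $\cConf'$ was in $\sConfSetVisited$ at some moment of the run, so $\cConf'\in D^{\star}$ by (M1). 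Finally $\cConfInit\notin D^{\star}$: by (M2) every $\tuple{\sConf,\_}\in\sConfSetVisited^{\star}$ was processed, and had any processed $\sConf$ satisfied $\cConfInit\models\sConf$ the loop would have returned its trace instead of \texttt{unreachable}.

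For the trace outcome, suppose the loop returns $\trace$ while handling a pair $\tuple{\sConf_n,\trace}$ with $\cConfInit\models\sConf_n$. A short induction on how traces are assembled (each inserted pair has the form $\tuple{\sConf',\sConf'\cdot\thid\cdot\trace'}$ with $\sConf'\in\ipre{\thid}{\sConf}{}$ for the just‑processed pair $\tuple{\sConf,\trace'}$, and every trace has its own constraint as head) shows $\trace=\sConf_n\cdot\thid_n\cdot\sConf_{n-1}\cdot\thid_{n-1}\cdots\thid_2\cdot\sConf_1$ with $\sConf_1\in\sConfSetBad$ and $\sConf_i\in\ipre{\thid_i}{\sConf_{i-1}}{}$ for $1<i\le n$. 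Put $\cConf_n=\cConfInit\in\denotationOf{\sConf_n}$ and descend: whenever $\cConf_i\in\denotationOf{\sConf_i}$, fact (a) applied to $\sConf_i\in\ipre{\thid_i}{\sConf_{i-1}}{}$ gives $\cConf_i\in\denotationOf{\sConf_i}\subseteq\smsetcomp{\cConf''}{\cConf''\reducesTo{\thid_i}{}\cConf'''\textrm{ and }\cConf'''\in\denotationOf{\sConf_{i-1}}}$, so there is $\cConf_{i-1}\in\denotationOf{\sConf_{i-1}}$ with $\cConf_i\reducesTo{\thid_i}{}\cConf_{i-1}$. This produces $\cConf_n=\cConfInit,\ldots,\cConf_1$ with $\cConf_i\reducesTo{\thid_i}{}\cConf_{i-1}$ for $1<i\le n$ and $\cConf_1\in\denotationOf{\sConf_1}\subseteq\denotationOf{\sConfSetBad}$, as claimed.

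The step I expect to be the real obstacle is the \texttt{unreachable} direction, and within it the interplay between (M1) and the subsumption discipline: predecessor constraints can be \emph{discarded} outright and already‑visited constraints can later be \emph{deleted}, so one must argue that no backward‑reachable configuration escapes through such gaps. Monotonicity of $\bigcup_{\tuple{\sConf,\_}\in\sConfSetVisited}\denotationOf{\sConf}$ is precisely what rules this out, so the whole argument rests on a careful statement and proof of (M1) (and of (M2)); everything else is routine induction built on the exactness of $\ipre{\cdot}{\cdot}{}$ and the soundness of $\entailedBy$.
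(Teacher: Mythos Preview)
Your proof is correct and rests on the same two ingredients the paper uses---exactness of $\mathtt{pre}$ and soundness of $\entailedBy$---but you package the backward argument differently. The paper's sketch maintains a step-indexed invariant (``at iteration $m$, every configuration that reaches $\denotationOf{\sConfSetBad}$ in $m{-}1$ steps is covered by $\sConfSetVisited_m$''), whereas you instead establish monotonicity of $\bigcup_{\tuple{\sConf,\_}\in\sConfSetVisited}\denotationOf{\sConf}$ (your M1) together with the fact that every surviving pair was eventually processed (your M2), and then argue directly that the \emph{final} denotation set is predecessor-closed. Your decomposition is a bit cleaner, since it avoids having to align iteration count with backward depth and, more importantly, it makes explicit how the subsumption deletions at line~10 do not lose coverage---a point the paper's sketch leaves to the reader under the phrase ``by soundness of $\entailedBy$''. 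The trace-returned direction is handled identically in both: unwind the recorded trace and apply exactness of $\mathtt{pre}$ at each step.
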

    \begin{proof}
      Sketch. Assume a constraint $\sConf=\sConfTuple$.
      It should be clear that we can
      complete the definition of $\preReducesTo{}{}$ (with the rules for
      the non-phaser-related statements) in such a way
      that $\denotationOf{\cup_{\thid\in\thidSet}\ipre{\thid}{\sConf}{}}$ coincides with
       $\smsetcomp{\cConf'}{{\cConf'}\reducesTo{}{}{\cConf} \textrm{
          with } \cConf\in\denotationOf{\sConf}}$.
      We can establish by induction the following procedure invariants. At iteration $m$:
      \begin{itemize}
      \item $\sConfSetWorking_m\subseteq\sConfSetVisited_m$
      \item $\tuple{\sConf_n,
        \sConf_n\cdot\thid_n\cdots\thid_1\cdot\sConf_1}\in\sConfSetVisited_m$
        then there are $\cConf_n,\ldots\cConf_1$ with
        $\cConf_n\in\denotationOf{\sConf_n}$ and
        $\cConf_1\in\denotationOf{\sConfSetBad}$ and
        $\cConf_i\reducesTo{\thid_i}{}\cConf_{i-1}$ for $i:1<i\leq n$.
      \item If $\cConf=\cConf_m$ is such that:
        $\cConf_i\reducesTo{\thid_i}{}\cConf_{i-1}$ for $i:1<i\leq m$
        with $\cConf_1\in\denotationOf{\sConfSetBad}$ then, by
        soundness of $\entailedBy$, there is $\tuple{\sConf,
          \tau}\in\sConfSetVisited_m$ such that
        $\cConf_m\in\denotationOf{\sConf}$.
      \end{itemize}
      \end{proof}

\begin{theorem}[Free termination]
\label{thm:free:termination}
\textup{check(}$\program$\textup{,}$\sConfSetBad$\textup{,}$\thNum$\textup{,}$\phNum$\textup{)}
terminates for ${\thNum,\phNum}\in\nats$ and free $\sConfSetBad$.
\end{theorem}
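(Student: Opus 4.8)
The plan is to combine three ingredients. First, the symbolic predecessor operator $\pre{\thid}{\cdot}$ of Fig.(\ref{fig:pre:symbolic}) is finitely branching and maps free constraints to free constraints; since $\sConfSetBad$ is free, every constraint ever stored in $\sConfSetWorking$ or $\sConfSetVisited$ is then free, hence of degree $0$. Second, by the WQO lemma the entailment $\entailedBy$ is a well-quasi-order on constraints of a fixed degree with boundedly many tasks and phasers. Third, a worklist exploration over a well-quasi-order with the subsumption bookkeeping performed by Procedure~\ref{proc:check} cannot run forever. One preliminary remark makes the bounds explicit: because of the early \texttt{continue}, a constraint with more than $\thNum$ tasks or more than $\phNum$ phasers is never expanded, and each rule of Fig.(\ref{fig:pre:symbolic}) adds at most one task and never adds a phaser, so every constraint handled by the procedure has at most $\thNum+1$ tasks and at most $\phNum$ phasers.

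The heart of the argument --- and the step I expect to be the main obstacle --- is showing that $\pre{\thid}{\sConf}$ is free whenever $\sConf$ is. Finiteness of $\pre{\thid}{\sConf}$ is routine: each rule of Fig.(\ref{fig:pre:symbolic}) has finitely many instantiations of its existentially bound data ($\phvar$, $\phid$, $\othid$, the subset $\ophidSet$, the choice of spawned task, \ldots) and each instantiation fixes the output up to the deterministic closure-based graph operations. For freeness I would proceed rule by rule. The rules $(\mathtt{drop})$, $(\mathtt{newPhaser~I})$, $(\mathtt{newPhaser~II})$ and $(\mathtt{exit})$ only intersect a free graph with conjunctions of clauses of the permitted shape $\sigVar{\thid}{\phid}\geq\waitVar{\othid}{\phid}\geq 0$, and closing such an intersection again yields only edges of the three permitted forms with non-negative weights. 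The delicate cases are $(\mathtt{signal})$ and $(\mathtt{wait})$: they rename the incremented phase to a fresh variable, re-introduce the original phase through an equality $\sigVar{\thid}{\phid}=\sigVar{}{}-1$ (resp. $\waitVar{\thid}{\phid}=\waitVar{}{}-1$), and project the fresh variable away, transiently creating an edge of weight $-1$. Here one must show that after closure and projection no finite negative weight survives: the only way a finite weight can reach a surviving edge through the fresh node is along the ``$-1$'' edge, and the strictness in the graph $\ggraphOf{\wedge_{\othid}(\sigVar{\thid}{\phid}>\waitVar{\othid}{\phid}\geq 0)}$ that was intersected in supplies the compensating $+1$, so every surviving edge is again of one of the three permitted non-negative forms. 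The rules $(\mathtt{asynch})$ and $(\mathtt{exit})$ are delicate for another reason: they intersect with equalities $\waitVar{\thid}{\phid_i}=\waitVar{\othid}{\phid_i}\wedge\sigVar{\thid}{\phid_i}=\sigVar{\othid}{\phid_i}$ between a surviving task $\thid$ and a copy $\othid$, momentarily introducing wait--wait and signal--signal edges of weight $0$ that are not allowed in a free constraint; but $\othid$ is then projected away, and since each phase of $\othid$ is pinned to the corresponding phase of $\thid$, every surviving path through $\othid$'s vertices collapses to a path over the remaining vertices carrying an edge of one of the three permitted forms. Closure of free constraints under $\pre$ follows, and then --- by the same invariant used in the proof of Lemma~\ref{lem:correctness}, namely that every worklist element is obtained from $\sConfSetBad$ by finitely many $\pre$ steps --- every constraint the procedure manipulates is free, hence of degree $0$.

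Given this, the WQO lemma applies directly with $k=0$, $n=\thNum+1$, $p=\phNum$: $\entailedBy$ is a well-quasi-order on the set of degree-$0$ constraints with at most $\thNum+1$ tasks and at most $\phNum$ phasers, and this set contains every constraint the procedure ever handles. This is precisely where freeness is needed: for degree-$0$-but-non-free (or higher-degree) constraints $\pre$ need not keep the degree bounded, which is why checking plain reachability later has to go through the relaxation of Section~\ref{sec:proc} rather than through this termination guarantee.

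Finally I would run the classical worklist termination argument. Write $V_t$ for $\sConfSetVisited$ after $t$ iterations of the \texttt{while} loop and $\downarrow V_t = \setcomp{\tau}{\exists \osConf\in V_t.\ \osConf \entailedBy \tau}$. A constraint leaves $\sConfSetVisited$ only when a freshly produced $\sConf'$ with $\sConf'\entailedBy\osConf$ is added at the same time, so by transitivity of $\entailedBy$ the set $\downarrow V_t$ never shrinks along the run. Conversely, the guard controlling an insertion requires that the inserted $\sConf'$ have no $\entailedBy$-smaller element in the current $\sConfSetVisited$, i.e. $\sConf'\notin\downarrow V_t$ just before and $\sConf'\in\downarrow V_{t+1}$ just after; hence each insertion strictly enlarges $\downarrow V$. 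If the \texttt{while} loop never halted it would perform infinitely many insertions $\sConf'_1,\sConf'_2,\ldots$, and the two observations together force $\sConf'_j\not\entailedBy\sConf'_k$ for all $j<k$ --- an infinite $\entailedBy$-bad sequence of degree-$0$ constraints with at most $\thNum+1$ tasks and $\phNum$ phasers, contradicting the previous paragraph. Hence the loop terminates and the procedure returns a symbolic run or \texttt{unreachable}.
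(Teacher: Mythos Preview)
Your proposal is correct and follows essentially the same approach as the paper's proof: preservation of freeness under $\mathtt{pre}$, application of the WQO lemma at degree $0$ with bounded tasks and phasers, and the standard bad-sequence argument for worklist termination. The paper's own proof is only a sketch that asserts freeness preservation without the rule-by-rule analysis and gives the bad-sequence argument in two lines; you have filled in precisely the details (the compensating $+1$ from the strict intersection in the $\mathtt{signal}$/$\mathtt{wait}$ cases, the collapse of the transient equalities in $\mathtt{asynch}$, and the monotonicity of the $\entailedBy$-closure of $\sConfSetVisited$) that the sketch leaves implicit.
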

\begin{proof}
Sketch. Freeness is preserved by 
the $\mathtt{pre}$ computation (Fig.(\ref{fig:pre:symbolic})  of the appendix).
Then we use, similar to \cite{coverability,wsts:everywhere},
the fact that non-termination would allow us to build
an infinite sequence of constraints passing the test at line 9 of the procedure.
More concretely, let $\sConf_1, \sConf_2, \ldots$ be the sequence of constraints
that pass (in that order) the test  at line 9.
If a constraint $\sConf_i$ is replaced by another constraint $\sConf_k$, then $\sConf_k\entailedBy\sConf_i$.
This holds for any replaced constraint. By transitivity, this means all replacements happen with weaker
constraints.
This means $\sConf_i \not\entailedBy \sConf_j$, for any $i<j$. This violates well quasi ordering of
$\entailedBy$ over the degree-0 constraints. 
\end{proof}

In order to check reachability of arbitrary constraints, we may need to force termination.
We do this by soundly bounding the degree of generated constraints using a relaxation $\rho_k$.
The relaxation $\rho_k(\sConfTuple)$
    replaces, in each graph $\sPhaseMp(\phid)$,
    each weight $k''$ s.t. $k''<-k$ with $-\infty$.

    \NoCaptionOfAlgo
    \begin{algorithm}
      \ForEach{$\sConf''\in\pre{\thid}{\sConf}$}{
        Let $\sConf'=\relaxDegree{\sConf''}{k}$\;
      }
  \caption{
    Fig.~6. Systematic relaxation}
    \end{algorithm}
%

\begin{theorem}[Forced termination]
\label{thm:forced:termination}
Independently of the degree of
$\sConfSetBad$,
Procedure \textup{check(}$\program$\textup{,}$\sConfSetBad$\textup{,}$\thNum$\textup{,}$\phNum$\textup{)},
where line 8 is replaced by the two
lines of Fig.~(6), is sound and guaranteed to terminate.
\end{theorem}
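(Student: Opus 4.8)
The plan is to argue soundness and termination separately, reusing the two lemmas already in place. For soundness, I would observe that the only change to the procedure is that each constraint $\sConf''$ produced by $\pre{\thid}{\sConf}$ is replaced by the relaxed constraint $\relaxDegree{\sConf''}{k}$ before being inserted into $\sConfSetWorking$ and $\sConfSetVisited$. The key fact I need is that relaxation weakens: since $\rho_k$ only replaces some weights $k''<-k$ by $-\infty$, every directed edge of $\sConf''$ is labelled with a weight that is no smaller in $\relaxDegree{\sConf''}{k}$, so $\relaxDegree{\sConf''}{k}\entailedBy\sConf''$, hence (by soundness of $\entailedBy$, proved earlier) $\denotationOf{\sConf''}\subseteq\denotationOf{\relaxDegree{\sConf''}{k}}$. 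Consequently the relaxed search explores an over-approximation: if the exact procedure \textup{check(}$\program$\textup{,}$\sConfSetBad$\textup{,}$\thNum$\textup{,}$\phNum$\textup{)} of Lemma~\ref{lem:correctness} would reach $\cConfInit$, so does the relaxed one, and therefore when the relaxed procedure returns \texttt{unreachable} we may conclude $\cConfInit\not\reducesTo{*}{}\denotationOf{\sConfSetBad}$. (One must note that a trace returned by the relaxed procedure is no longer guaranteed to be a genuine concrete run — this is exactly the expected price of relaxation and matches the "sound but incomplete" claim; soundness here is with respect to the \texttt{unreachable} answer.) The induction invariants of Lemma~\ref{lem:correctness} carry over verbatim after replacing each $\denotationOf{\sConf''}$ by the larger $\denotationOf{\relaxDegree{\sConf''}{k}}$ in the relevant clause.

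For termination, the idea is to reduce to Theorem~\ref{thm:free:termination}'s argument but with the bounded-degree WQO (Lemma~\ref{lem:WQO}) in place of the degree-0 one. First I would check that every constraint that ever enters the working list after the modification has degree at most $k$: the constraints in $\sConfSetBad$ may have arbitrary degree, but each is replaced — no, more carefully, I should handle the seeds too, so I would assume (or note one may pre-apply $\rho_k$ to $\sConfSetBad$ as well, which is harmless by the same weakening argument) that every element passing the test at line~9 has degree $\le k$, because it is the output of $\rho_k$. Since $\pre{\thid}{\cdot}$ followed by $\rho_k$ always yields a degree-$\le k$ constraint, the set of constraints appearing is contained in the set of degree-$\le k$ constraints over at most $\thNum$ tasks and $\phNum$ phasers (the bounds $\thNum,\phNum\in\nats$ are used here, exactly as in Theorem~\ref{thm:free:termination}, via the \texttt{continue} at line~5). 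Then the standard working-list/WQO argument applies: let $\sConf_1,\sConf_2,\dots$ be the constraints passing line~9 in order; whenever a previously stored $\sConf_i$ is removed it is because some later $\sConf_k\entailedBy\sConf_i$, so by transitivity all surviving constraints are pairwise $\entailedBy$-incomparable with the later ones, giving $\sConf_i\not\entailedBy\sConf_j$ for all $i<j$; an infinite such sequence contradicts Lemma~\ref{lem:WQO} (WQO of $\entailedBy$ on degree-$k$ constraints with bounded numbers of tasks and phasers). Hence the sequence is finite and the procedure terminates.

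The main obstacle I anticipate is the bookkeeping that relaxation commutes appropriately with $\pre$ and with the invariants — specifically, being careful that $\rho_k$ applied to the output of $\mathtt{pre}$ still satisfies the structural well-formedness assumed elsewhere (e.g.\ the strengthening of $\gedge{\sigVar{\thid}{\phid}}{\kconst}{\waitVar{\othid}{\phid}}$ weights to $\max(\kconst,0)$, and that replacing weights $<-k$ by $-\infty$ never destroys closure or the registration predicate $\isRegistred{\thid}{\phid}{\sPhaseMp(\phid)}$, since vertex sets are untouched). These are routine but need stating. Everything else is a direct combination of the weakening property of $\rho_k$, the soundness of $\entailedBy$, Lemma~\ref{lem:correctness}, and Lemma~\ref{lem:WQO}; the proof is short once these pieces are assembled.
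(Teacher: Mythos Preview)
Your proposal is correct and follows essentially the same route as the paper: soundness via the weakening property $\relaxDegree{\sConf}{k}\entailedBy\sConf$ (hence $\denotationOf{\sConf}\subseteq\denotationOf{\relaxDegree{\sConf}{k}}$), and termination by replaying the working-list/WQO argument of Theorem~\ref{thm:free:termination} with the bounded-degree WQO Lemma in place of the degree-$0$ one. Your write-up is in fact considerably more careful than the paper's two-line sketch (e.g.\ the observation about relaxing the seeds in $\sConfSetBad$ and the remark that returned traces are no longer guaranteed concrete), but the underlying ideas coincide.
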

\begin{proof}
  Soundness is due to the validity of $\relaxDegree{\sConf}{k}\entailedBy\sConf$
  while the termination argument relies, similarly to Theorem~(\ref{thm:free:termination}),
  on well-quasi orederness of $\entailedBy$ on the set of constraints with bounded degree and
  fixed numbers of tasks and phasers.
\end{proof}

\section{The parameterized case}
\label{sec:param}

We apply view abstraction \cite{view:abstraction:vmcai13}
to derive a sound analysis for
programs with
arbitrary numbers of coexisting tasks.
In this preliminary work, only the number of tasks is parameterized.
%
Assume a program $\program=\programTuple$.

\paragraph{Constraint views.}
The view $\restrict{\sConfTuple}{\othidSet}$ of a constraint
$\sConfTuple$ wrt.  set $\othidSet\subseteq\thidSet$ 
is the tuple
$\smtuple{\othidSet,\phidSet,\boolMp,\restrictPc{\pcMp}{\othidSet},
  \restrictPc{\varMp}{\othidSet},
  \smsetcomp{\phid\mapsto\restrictGraph{\sPhaseMp(\phid)}{\othidSet}}{\phid\in\phidSet}}$
where:
\begin{itemize}
\item [i)] $\restrictPc{\pcMp}{\othidSet}$ is the restriction of $\pcMp$ to the domain $\othidSet$.
\item[ii)] $\restrictPc{\varMp}{\othidSet}$ is the restriction of $\varMp$
  to the domain $\othidSet$.  
\item[iii)] $\restrictGraph{\sPhaseMp(\phid)}{\othidSet}$ is obtained
  by restricting the variables of $\sPhaseMp(\phid)$ to $\setcomp{\sigVar{\thid}{\phid},\waitVar{\thid}{\phid}}{\thid\in\othidSet}$, i.e.,
  capturing $\othidSet$'s signals and waits.
\end{itemize}

Intuitively, views are obtained by projecting on subsets of tasks. 
%
A view $\restrict{\sConf}{\othidSet}$ is
said to be of size $k=|\othidSet|$.
%
%
For $k\geq 0$, we aim to compute a least fixpoint $\osConfSet_k$
starting from 
$\alpha_k(\constraintOf{\cConfInit})$
and satisfying 
$\osConfSet_k\entailedBy\alpha_k(post_{\rho}(\gamma_k(\osConfSet_k)))$ ($\sConfSet \entailedBy \osConfSet$
if there is $\sConf\in\sConfSet$ with $\sConf\entailedBy\osConf$ for each $\osConf\in\osConfSet$).
The idea is to restrict the computations to a finite number of tasks 
using the abstraction $\alpha_k$ 
and the concretization $\gamma_k$ (see 
\cite{view:abstraction:vmcai13} for more details). 
The operator $post_{\rho}$ consists in a simple post computation (see Fig. (\ref{fig:spost}) of the appendix)
followed by a relaxation $\rho$ that bounds the degree of the obtained
views to some predetermined value. 
The relaxation is applied to ensure termination for bounded view sizes.
Intuitively, any $post_\rho$ reachable constraint will have all its
views of size $k$ captured by some views in $\osConfSet_k$.
%

Abstraction and conretizations are defined as follows:
\begin{enumerate}
\item Abstraction $\alpha_k(\sConfSet)$ of 
set $\sConfSet$ is the union $\alpha_k^1(\sConfSet)\cup
  \alpha_k^2(\sConfSet)$ where: 
  \begin{enumerate}
    \item
  $\alpha_k^1(\sConfSet)=\setcomp{\sConfTuple\in\sConfSet}{|\thidSet|\leq k}$
  \item
    $\alpha_k^2(\sConfSet)=\setcomp{\restrict{\sConf}{\thidSet}}{\sConf\in\sConfSet\wedge|\thidSet|=k}$.
  \end{enumerate}
\item Concretization $\gamma_k(\sConfSet)$ of a set 
  $\sConfSet$ is the union $\gamma_k^1(\sConfSet)\cup
  \gamma_k^2(\sConfSet)$ where:
  \begin{enumerate}
    \item
  $\gamma_k^1(\sConfSet)=\sConfSet$, and 
  \item
    $\gamma_k^2(\sConfSet)$ is the set of $\sConf=\sConfTuple$ constraints
    s.t. $\othidSet \subseteq \thidSet \wedge |\othidSet|=k$ implies $\restrict{\sConf}{\othidSet}\in\sConfSet$
  \end{enumerate}
\end{enumerate}

We first compute the fixpoint using only $\alpha^1_k$ and $\gamma^1_k$.
If the denotation of the fixpoint's views intersects the one of $\sConfSetBad$,
then we return the faulty trace. Because we only use $\alpha^1_k$ and $\gamma^1_k$,
the only approximation is due to the relaxation $\rho$.
The precision of the relaxation can then be increased.
%
%
Otherwise, we 
use $\alpha_k$ and $\gamma_k$ in order to over-approximate the fixpoint.
%
If the obtained views still do not include bad configurations, then we conclude
that no matter how many tasks are generated, bad configurations are  unreachable.
In case the views do include bad configurations then we restart the analysis with
views of bigger sizes (i.e., restart with a larger $k$).

\begin{figure*}[t]
  \begin{center}
    \scriptsize
    $$
    \begin{array}{cc}
\multicolumn{2}{c}{
      \infer[\mathtt{(newPhaser)}]
            {\sConfTuple \reducesTo{\thid}{}\tuple{\thidSet,\phidSet',\boolMp,\mapSubst{\pcMp}{\thid}{\seqVal},\varMp',\sPhaseMp'}}
            {
              \begin{array}{c}
                \pcMp(\thid)=\phvar:=\newp;\seqVal  ~~\wedge~ \phid\not\in\phidSet ~\wedge~\\
                \phidSet'=\phidSet\cup\set{\phid} ~\wedge~ \varMp'=\mapSubst{\varMp}{\thid}{\mapSubst{\varMp(\thid)}{\phvar}{\phid}} \\
                \wedge~ \sPhaseMp'=\mapSubst{\sPhaseMp}{\phid}{\ggraphOf{\waitVar{\thid}{\phid}=\sigVar{\thid}{\phid}=0}}
              \end{array}
            }
            }
            \\
            \\
            \multicolumn{2}{c}{
       \infer[\mathtt{(signal)}]
             {\sConfTuple \reducesTo{\thid}{}
               \tuple{\thidSet,\phidSet,\boolMp,\mapSubst{\pcMp}{\thid}{\seqVal},\sPhaseMp'} }
             {
               \begin{array}{c}
               \pcMp(\thid)=\sig{\phvar};\seqVal ~ \wedge \\
               \varMp(\thid)(\phvar)=\phid ~\wedge~ 
               \isRegistred{\thid}{\phid}{\sPhaseMp(\phid)} ~\wedge \\
               \sPhaseMp'=\mapSubst{\sPhaseMp}{\phid}
                                   {
                                     \left(\left(\subst{\sPhaseMp(\phid)}{\sigVar{\thid}{\phid}}{\sigVar{}{}}
                                     \intersect\ggraphOf{\sigVar{\thid}{\phid}=\sigVar{}{}+1}\right)\removeVars\set{\sigVar{}{}}\right)
                                   }
               \end{array}
             }
             }
      \\
      \\
      \multicolumn{2}{c}{
      \infer[\left(\!\!\!\!\begin{array}{c}\mathtt{assert.} \\ \mathtt{ok}\end{array}\!\!\!\!\right)]
            {\sConfTuple \reducesTo{\thid}{}
              \tuple{\thidSet,\phidSet,\boolMp,\mapSubst{\pcMp}{\thid}{\seqVal},\varMp,\sPhaseMp}}
            {
              \begin{array}{c}
                \pcMp(\thid)=\assert{\cond};\seqVal ~ \wedge \\
                \boolMp(\cond)=\true
                \end{array}
            }
            }
            \\
            \\
            \multicolumn{2}{c}{
       \infer[\mathtt{(drop)}]
             {\sConfTuple \reducesTo{\thid}{}
               \tuple{\thidSet,\phidSet,\boolMp,\mapSubst{\pcMp}{\thid}{\seqVal},\varMp,\sPhaseMp'} }
             {\begin{array}{c}
                 \pcMp(\thid)=\dereg{\phvar};\seqVal ~ \wedge 
                 \varMp(\thid)(\phvar)=\phid ~\wedge~ \\
                 \isRegistred{\thid}{\phid}{\sPhaseMp(\phid)} ~\wedge~  
                 \sPhaseMp'=\mapSubst{\sPhaseMp}{\phid}
                                     {\sPhaseMp(\phid)\ominus\set{\waitVar{\thid}{\phid},\sigVar{\thid}{\phid}}}
               \end{array}
             }
             }
       \\
       \\
       \multicolumn{2}{c}{
         \infer[\mathtt{(asynch)}]
               {\sConfTuple \reducesTo{\thid}{}
                 \tuple{\thidSet,\phidSet,\boolMp,\mapSubst{\pcMp'}{\thid}{\seqVal_2},\varMp,\sPhaseMp'}}
               {
                 \begin{array}{c}
                   \pcMp(\thid)=\asynch{\task}{\phvar_1,\ldots \phvar_k}\{\seqVal_1\};\seqVal_2 ~\wedge~ 
                   \parametersOf{\task}=(\ophvar_1,\ldots \ophvar_k) ~\wedge~ \\
                   \textrm{for all } i:1\leq i\leq k.~\varMp(\thid)(\phvar_i)=\phid_i ~\wedge~  \isRegistred{\thid}{\phid_i}{\sPhaseMp(\phid_i)} ~\wedge~ \\
                   \othid\not\in\thidSet ~\wedge~ 
                   \varMp'=\mapSubst{\varMp}{\othid}{\setcomp{\ophvar_i\mapsto \varMp(\thid)(\phvar_i)}{1\leq i\leq k}} ~\wedge~ 
                   \pcMp'=\mapSubst{\pcMp}{\othid}{\seqVal_1} ~\wedge~ \\
                   \sPhaseMp'=\substSet{\sPhaseMp}
                                       {\setcomp{\phid_i\leftarrow
                                           \sPhaseMp(\phid_i)\intersect\ggraphOf{\waitVar{\thid}{\phid_i}=\waitVar{\othid}{\phid_i}\wedge\sigVar{\thid}{\phid_i}=\sigVar{\othid}{\phid_i}}}
                                         {\varMp(\thid)(\phvar_i)=\phid_i \textrm{ for } \phid_i\in\phidSet\textrm{ and }1\leq i\leq k}}
                 \end{array}
               }
       }
       \\
       \\
       \multicolumn{2}{c}
                   {
                     \infer[\mathtt{(wait)}]
                           {\sConfTuple \reducesTo{\thid}{}
                             \tuple{\thidSet,\phidSet,\varMp,\mapSubst{\pcMp}{\thid}{\seqVal},\sPhaseMp'} }
                           {\begin{array}{c}
                               \pcMp(\thid)=\wait{\phvar};\seqVal ~\wedge~
                               \varMp(\thid)(\phvar)=\phid ~\wedge~
                               \isRegistred{\thid}{\phid}{\sPhaseMp(\phid)} ~\wedge~  
                               \\
         \pi =\bigwedge_{\othid\in\registred{\phid}{\sPhaseMp(\phid)}}\left(\waitVar{\thid}{\phid}<\sigVar{\othid}{\phid}\right)
         ~\wedge 
         \isSat{\sPhaseMp(\phid)\intersect \pi} ~\wedge~ \\
         \sPhaseMp'= \mapSubst{\sPhaseMp}{\phid}
         {\left(
         \left(\subst{\left(\sPhaseMp(\phid)\intersect \pi\right)}{\waitVar{\thid}{\phid}}{\waitVar{}{}}
         \intersect\ggraphOf{\waitVar{\thid}{\phid}=\waitVar{}{}+1}\right)
         \removeVars\set{\waitVar{}{}}\right)
         }
                           \end{array}}
                   }
       \\
       \\
       \multicolumn{2}{c}
                   {
                     \infer[\mathtt{(exit)}]
                           {\tuple{\thidSet,\phidSet,\boolMp,\pcMp,\varMp,\sPhaseMp} \reducesTo{\thid}{}
                             \tuple{\thidSet\setminus\set{\thid},\phidSet,\boolMp,\pcMp',\varMp',\sPhaseMp'}}
                           {
                             \begin{array}{c}
                               \pcMp(\thid)=\exit ~\wedge~ 
                               \varMp'=\varMp\setminus\set{\thid} ~\wedge~ 
                               \pcMp'=\pcMp\setminus\set{\thid} ~\wedge~ \\
                               \sPhaseMp'=\substSet{\sPhaseMp}{\setcomp{\phid \leftarrow \sPhaseMp(\phid)\ominus\set{\waitVar{\thid}{\phid},\sigVar{\thid}{\phid}}}{~\phid\in\phidSet}}
                             \end{array}
                           }
                   }            
    \end{array}
    $$
  \end{center}

  \caption{Computing
    $\ipost{\thid}{\sConfTuple}{}$ for some $\thid\in\thidSet$.
    Phasers' related part of the symbolic successor computation used during
    the view abstraction based 
    parameterized analysis of phaser programs.  }
  \label{fig:spost}
\end{figure*}

\section{Experimental Results}
\label{sec:experiments}



We report 
on experiments with our open source prototype 
{\em hjVerify}\footnote{\url{https://gitlab.ida.liu.se/apv/hjVerify}}
for the verification of phaser programs.
We conducted experiments on 12 different programs (some of which are
from \cite{Cogumbreiro:dynamicverifphasers:2015}). We considered both
deadlocks and assertions reachability problems. For
each property, we considered correct and buggy versions.
This gave 48 different instances with 2 to 3
phasers and 2 to 4 tasks (except for the parameterized case).
Our tool uses global phaser and task variables
as in \cite{Cogumbreiro:dynamicverifphasers:2015}.
%
%
We have experimented with adapting the view abstraction technique
\cite{view:abstraction:vmcai13} to verify phaser programs generating arbitrary
many tasks, i.e., parameterized verification where the number of phasers is fixed.
(see appendix for more details.)
We report on two parameterized examples. 
Experiments were conducted on a 2.9GHz processor
with 8GB of memory. 
\begin{center}
{\small
  \begin{tabular}{|c|r c c|}
  \hline
  ~\textbf{program}~ & ~\textbf{property}~ & \textbf{safe / buggy} &  \textbf{times} \\
  \hline
  \hline
    \multirow{2}{*}{01.Loopless} &   \multirow{1}{*}{deadlock:}  & {ok / trace}  &    1s / 1s  \\ \cline{2-4} 
                    &   \multirow{1}{*}{assertion:}   & {ok / trace}    &    1s / 1s  \\ \cline{2-4}
  \hline
  \hline
    02.Iterative    &   \multirow{1}{*}{deadlock:}  & {ok / trace}    &    1s / 1s \\ \cline{2-4}
          averaging     &   \multirow{1}{*}{assertion:} & {ok / trace}    &    1s / 1s \\ \cline{2-4}
  \hline
    \hline
      03.Ordered                  &   \multirow{1}{*}{deadlock:}  & {ok / trace}    &    1s / 1s \\ \cline{2-4}
        phasers       &   \multirow{1}{*}{assertion:} & {ok / trace}    &    13s / 1s \\ \cline{2-4}
         \hline
             \hline
         \multirow{2}{*}{04.Conditional}                   &   \multirow{1}{*}{deadlock:}  & {ok / trace}    &    2s / 1s \\ \cline{2-4}
               &   \multirow{1}{*}{assertion:} & {ok / trace}    &    4s / 7s \\ \cline{2-4}
         \hline
                      \hline
         \multirow{2}{*}{05.Loop Synch.}                   &   \multirow{1}{*}{deadlock:}  & {ok / trace}    &    178s / 145s \\ \cline{2-4}
               &   \multirow{1}{*}{assertion:} & {ok / trace}    &    7s / 13s \\ \cline{2-4}
         \hline
                      \hline
         \multirow{2}{*}{06.Nested forks}                  &   \multirow{1}{*}{deadlock:}  & {ok / trace}    &    2s / 1s \\ \cline{2-4}
               &   \multirow{1}{*}{assertion:} & {ok / trace}    &    1s / 1s \\ \cline{2-4}
          \hline
         \hline
        07.Conditional        &   \multirow{1}{*}{deadlock:}  & {ok / trace}    &    1s / 1s \\ \cline{2-4}
         membership      &   \multirow{1}{*}{assertion:} & {ok / trace}    &    12s / 3s \\ \cline{2-4}
         \hline
                      \hline
        08.Producer-                   &   \multirow{1}{*}{deadlock:}  & {ok / trace}    &    37s / 222s \\ \cline{2-4}
        consumer        &   \multirow{1}{*}{assertion:} & {ok / trace}    &    79s / 34s \\ \cline{2-4}
         \hline
                      \hline
        09.Parameterized                   &   \multirow{1}{*}{deadlock:}  & {ok / trace}    &    20s / 1s \\ \cline{2-4}
        loopless        &   \multirow{1}{*}{assertion:} & {ok / trace}    &    67s / 1s \\ \cline{2-4}
         \hline
                      \hline
        10.Parameterized                   &   \multirow{1}{*}{deadlock:}  & {ok / trace}    &    1s / 1s \\ \cline{2-4}
        iterative-averaging      &   \multirow{1}{*}{assertion:} & {ok / trace}    &    1s / 1s \\ \cline{2-4}
         \hline
                      \hline
        \multirow{2}{*}{11.Running-2}                   &   \multirow{1}{*}{deadlock:}  & {ok / trace}    & 5s / 1s    \\ \cline{2-4}
              &   \multirow{1}{*}{assertion:} & {ok / trace}    &  26s / 4s   \\ \cline{2-4}
         \hline
                      \hline
        \multirow{2}{*}{12.Running-3}                   &   \multirow{1}{*}{deadlock:}  & {ok / trace}    & 4318s / 128s    \\ \cline{2-4}
              &   \multirow{1}{*}{assertion:} & {ok / trace}    &   18631s / 54s  \\ \cline{2-4}
         \hline
         \end{tabular}
         }
\end{center}

%
%
Our implemented procedure does not eagerly concretize all tasks states
as described in the predecessor computation of Section \ref{sec:symbolic}.
Instead we collect conditions on the phases of the threads that did
not take any action yet and lazily concretize them.
%
Reported times for 
checking deadlocks 
are the sums of the times required to check reachability for each cycle.
%
%
The prototype is only a proof of concept.  For instance, the example
(12.Running-3) is a variant of (11-Running-2) where a task
instance is spawned twice leading to two symmetrical tasks (out of four).
This required up to three orders of magnitude more time to check. We
believe partial order reduction techniques would help here. Other
relevant heuristics would be to make use of priority queues and to
organize the minimal sets. 
All examples are available on the tool homepage.

\section{Conclusion}
\label{sec:conc}

We have proposed a gap-order based reachability analysis for phaser
programs.
We have showed our analysis to be exact and guaranteed to terminate
when checking runtime, race and assertion errors. 
We have established the undecidability of deadlock verification
and explained how to turn our analysis into a sound
over-approximation.
To the best of our knowledge, this is beyond the capabilities of
current verification techniques which currently only target concrete inputs to phaser programs.
%
We are currently working on tackling the parameterized case and have
obtained preliminary encouraging results.
Apart from improving the scalability of the tool and from using it in
combination with predicate abstraction and abstract interpretation in
order to analyze actual source code, we are investigating the
applicability of the presented techniques for the verification of
similar synchronization constructs.


%










\def\Nst#1{$#1^{st}$}\def\Nnd#1{$#1^{nd}$}\def\Nrd#1{$#1^{rd}$}\def\Nth#1{$#1^{th}$}

\end{document}